\definecolor{myblue}{RGB}{94, 129, 181}
\definecolor{myorange}{RGB}{225, 156, 36}
\definecolor{mygreen}{RGB}{143, 176, 50}
\def\E{\mathbb{E}}
\def\bF{\mathbb{F}}
\def\bP{\mathbb P}
\def\R{\mathbb{R}}
\def\sA{{\mathcal A}}
\def\sE{{\mathcal E}}
\def\sF{{\mathcal F}}
\def\sL{{\mathcal L}}
\def\sM{{\mathcal M}}
\def\sS{{\mathcal S}}
\def\eps{\varepsilon}
\def\bmo{\text{bmo}}
\numberwithin{equation}{section}
\theoremstyle{plain}                % title and  number in bold, text italic
\newtheorem{theorem}{Theorem}[section]
\newtheorem{proposition}[theorem]{Proposition}
\theoremstyle{definition}           % title and number in bold, text normal
\newtheorem{definition}[theorem]{Definition}
\theoremstyle{remark}               % title and number in italic, text normal
\newtheorem{remark}{Remark}[section]
\begin{document}

\pagenumbering{arabic} \pagestyle{plain}
\begin{center}
  {\large \bf Universal basic income in a financial equilibrium}
  \ \\ \ \\
Kim Weston\footnote{Department of Mathematics, Rutgers University, Piscataway, NJ, 08854 USA (kw552@rutgers.edu).}%The author acknowledges support by the National Science Foundation under Grant No. DMS\#1908255 (2019-2022). Any opinions, findings and conclusions or recommendations expressed in this material are those of the author and do not necessarily reflect the views of the National Science Foundation (NSF).}
\\

Rutgers University
\ \\

%\ \\

\today
\end{center}
\vskip .5in

\begin{abstract}
  Universal basic income (UBI) is a tax scheme that uniformly redistributes aggregate income amongst the entire population of an economy.  We prove the existence of an equilibrium in a model that implements universal basic income.  The economic agents choose the proportion of their time to work and earn wages that can be used towards consumption and investment in a financial market with a traded stock and annuity.  A proportion of the earned wages is uniformly distributed amongst all agents, leading to interconnectedness of the agents' decision problems, which are already dependent on one another through the financial market.  The decision problems are further entangled by Nash perceptions of labor; the agents respond to the labor choices of others and act upon their perceived income in their decision problems.  The equilibrium is constructed and proven to exist using a backward stochastic differential equation (BSDE) approach for a BSDE system with a quadratic structure that decouples.  We analyze the effects of a universal basic income policy on labor market participation, the stock market, and welfare.  While universal basic income policies affect labor market participation and welfare monotonically, their effects on the stock market are nontrivial and nonmonotone.
\end{abstract}

\ \\

\noindent{\bf Keywords:} Universal basic income; Financial equilibrium; Nash response functions; Labor-leisure choice problem; Backward stochastic differential equations \\

\noindent{\bf Mathematics Subject Classification (2020):} 60G99, 60H30, 91B70, 91G80 \\% 91B39 - labor markets

\noindent{\bf JEL Classification:} G12, D52, H23\\

\section{Introduction}\label{section:intro}
Universal basic income (UBI) policies are tax schemes that redistribute income amongst an entire population so that everyone receives an equal portion of aggregate income.  %These policies originated in the U.S. and Canada in the 1960's with the hope of alleviating poverty, as documented by NAME~\cite{}.  
Many local and regional governments have sought to alleviate poverty, encourage upward socioeconomic mobility, and dampen the effects of technological automation by means of UBI experiments\footnote{See for example, the works of Peetz et al.~\cite{PRX21FP}, Yang~\cite{Y18}, and Hughes~\cite{H18}.}.  Over the last two decades, notable trials of UBI have played out across dozens of regions across the globe with a high concentration of experiments in the US.

% UBI experiments, need for studying UBI
While economists, sociologists, and policy makers have overseen and analyzed UBI experiments, current research is limited to understanding individual impacts and region-wide questions of budgeting; see, for example, West et al.~\cite{CDW23NMH}, Gertler et al.~\cite{GMR12AEJ}, and Kueng~\cite{K18QJE}.  UBI experiments suffer from three critical flaws:  they are expensive, they take many months or years to play out, and they are small in scale.  In particular, the expense leads to their small-scale nature, forcing UBI experiments to only directly benefit the sliver of low income individuals who win a lottery in order to receive basic income allotments.  Such experiments are not truly universal, and they cannot address broader financial economic concerns.  This work models universal basic income within a financial equilibrium in order to address questions that were previously unanswerable due to experimental limitations.%so that it can simultaneously take into account the labor choices and welfare of all economic participants and measure the effects on financial markets.  In doing so, we address questions that UBI experiments cannot, such as, how do UBI policies affect labor participation, the stock market, and welfare?  %We answer these questions in Section~\ref{section:effects}.

% broad overview: Cobb-Douglas, financial market, labor choice with Nash
We propose a financial equilibrium model to study UBI with agents optimizing their labor market participation, running consumption, and investment in the stock market.  %Using Cobb-Douglas utilities, the agents choose between consumption, investment in the financial market, and working.  
The financial market consists of a stock with continuous stochastic dividends and an annuity with a continuous constant dividend stream of one.  The presence of a traded annuity builds off the financial equilibrium work of Christensen and Larsen~\cite{CL14RAPS}, Weston and \v{Z}itkovi\'c~\cite{WZ20FS}, and Weston~\cite{W24FS}.

%we model agents' preferences as they trade off the utility between consumption and labor.  

% Labor response functions
The agents earn income by participating in the labor market with exogenously-specified wage rates.  The labor choices of others impact all agents in two ways.  First, they determine the aggregate income, which is partially redistributed to everyone as universal basic income.  Second, the agents respond to others' labor choices through Nash response functions, similar to those originally proposed for trading responses in Vayanos~\cite{V99RES} and used subsequently in Chen et al.~\cite{CCLS23FS}.  Response functions provide the mechanism for the agents to perceive how others will respond to their labor efforts.  From perceived labor choices, the corresponding perceived income streams enter into each agent's investment and consumption choice problem.  Our response functions contain a free parameter, called the influence parameter, describing the influence of others on labor choices.  The sign of the influence parameter impacts labor participation, the financial market, and welfare, showing that the Nash structure and its perceptions matter for the impact of UBI policies on equilibrium.
%Participating in the labor market allows agents to earn income through exogenous wage rates, which can then be invested in the financial market or consumed.  Labor participation is determined in a Nash sense.  The agents respond to the work of others via response functions, similar to those originally proposed for trading responses in Vayanos~\cite{V99RES}.  When response functions are used in financial equilibria, they represent the price impact from trading in equilibrium, such as in Choi et al.~\cite{CCLS23FS}.  We employ response functions in the labor choice problem, leading to perceived labor choices and their corresponding perceived income streams.  Response functions allow the agents to take the labor choices of others into account when deciding how much to work.  Our response functions contain a free parameter, $\delta$, describing the influence of others on labor choices.  Positive values of the influence parameter $\delta$ lead the agents to view the hard work of others positively, and choose to work more, while negative values of $\delta$ lead the agents to respond negatively to the hard work of others, working less.  Regardless of the value of $\delta$, the labor choices of others will still impact every agent through the universal basic income allotment and clearing in the financial market, but nonzero values of $\delta$ alter labor market perceptions and impact equilibrium outcomes.

% labor-leisure problems
Labor-leisure problems study the trade-off between earning income by working in the labor market versus deriving pleasure from leisure time.  Bodie et al.~\cite{BMS92JEDC} studied the labor-leisure problem in a continuous-time single agent model, where the agent maximized expected utility from running consumption and leisure from not working.  Basak~\cite{B99JEDC} extended the work of Bodie et al.~\cite{BMS92JEDC}, placing labor-leisure problems in the context of a complete market financial equilibrium.  The setting of Basak~\cite{B99JEDC} is the most similar setting to this work, although it does not consider tax schemes like UBI or a Nash component to the labor choice problem. Kenc~\cite{K04JEDC} studied the labor-leisure choice problem in a general equilibrium from the supply side and included tax scheme modeling.

% bsde structure
The construction of an equilibrium relies on solving a characterizing BSDE system, similar to Weston and \v{Z}itkovi\'c~\cite{WZ20FS} and Weston~\cite{W24FS}.  We focus on a financial market with two traded securities and one source of Brownian risk, so even though the characterizing BSDE system is coupled and quadratic, the quadratic terms are not coupled.  This BSDE system is readily solvable using existing results.

% effects
We analyze the effects of universal basic income policies on labor participation, stock market effects, and welfare, and we compare pure socialist and a social market economy in Section~\ref{section:effects}.  The effects depend on both the amount of redistribution of income and the influence parameter that describes the agents' labor perceptions.  For an influence parameter greater than one, labor market participation decreases as income redistribution increases, but for influence parameter values less than one, the opposite effect is achieved.  Welfare effects are studied for the case when wage rates are positive constants.  In this case, welfare behaves similarly to labor market participation; it is decreasing (increasing) when the amount of redistribution of income is increasing for large (small) influence parameter values.

The stock market effects are the most subtle.  While closed-form expressions are provided to describe the market price of risk and interest rate coefficients that appear in a state price deflator, they are nonmonotone and highly nontrivial.  An example is provided, highlighting some of the subtle variations in stock market effects due to UBI parameters.

% economics literature
\label{comment6}
While we focus on the financial economic and labor response aspects of UBI implementation from a theoretical academic perspective, Daruich and Fern\'andez~\cite{DF24AER} and Ghatak and Maniquet~\cite{GM19ARE} approach UBI implementation from a more practical economic perspective.  Daruich and Fern\'andez~\cite{DF24AER} model UBI in a stationary equilibrium with a representative cohort of agents that move through twenty stages (overlapping generations) of life, making decisions from getting a college education to saving for retirement to transferring wealth to children.  They find that UBI policies impact welfare negatively.  The negative effects are especially onerous on future generations because additional taxation limits future generations' skill development, education, and capital accumulation.  The negative welfare effects outweigh the positive effects, such as providing a safety net for adults displaced in a risky economy or automated job market.  Ghatak and Maniquet~\cite{GM19ARE} study UBI implementation from a theoretical normative perspective that is not equilibrium-based.  They conclude that UBI implementation is potentially reasonable only in two circumstances:  in developing countries where traditional welfare institutions are impediments to helping the poor, and as a short term complement to other conditional transfer policies.  Both \cite{GM19ARE} and \cite{DF24AER} provide rich, theoretical models for understanding the potential impacts of UBI policies in practice.  Our focus is different: we focus on the effects of a UBI policy on endogenously valued financial markets and use labor perceptions and trading in the financial market as the agent interaction mechanisms.

The structure of the paper is as follows.  Section~\ref{section:setting} sets up the model and states the main theorem, Theorem~\ref{theorem:main-1}.  Section~\ref{section:construction} constructs an equilibrium by defining response functions, defining the characterizing BSDE system, and postulating optimal strategies.  Theorem~\ref{thm:main} is a restatement of Theorem~\ref{theorem:main-1}, which states that an equilibrium exists and provides the details of its construction.  Section~\ref{section:effects} describes the effects of universal basic income on labor market participation, the financial market, and welfare.  %Section~\ref{section:effects} also compares purely socialistic and purely communistic economies through the lens of universal basic income in equilibrium.  
Section~\ref{section:proofs} provides the proofs.

\section{Model setting}\label{section:setting}

We study a continuous-time financial equilibrium in a pure exchange economy with a fixed finite time horizon $T<\infty$.  We let $B=(B_t)_{t\in[0,T]}$ be a one-dimensional Brownian motion on the probability space $(\Omega,\sF,\bP)$ equipped with the augmented Brownian filtration $\bF=(\sF_t)_{t\in[0,T]}$. We assume that $\sF_T = \sF$.  Throughout this work, equality (and inequality) between random variables is assumed to hold $\bP$-a.s., and we suppress time from the notation when possible.  %All stopping times are assumed to be $[0,T]$-valued.

The set of adapted, continuous, uniformly bounded processes is denoted by
$\sS^\infty$. For $q>1$, $\sS^q$ denotes the set of adapted, continuous
processes $X$ such that
$$
    \E\left[\sup_{0\leq t\leq T}|X_t|^q\right]<\infty.
$$
A martingale $M$ is called a \emph{BMO-martingale} if there exists a
constant $C>0$ such that, for every stopping time $\tau\leq T$,
$$
    \E\left[
        \langle M\rangle_T-\langle M\rangle_\tau
        \,\middle|\,\sF_\tau
    \right]\leq C.
$$
In this case, we write $M\in\text{BMO}$. The collection of progressively measurable processes $\sigma$ for which $\int_0^\cdot\sigma\,dB$ is a BMO-martingale is denoted by $\bmo$.  The collection of progressively measurable processes $\eta$ such that
  $$
    \sup_{\tau\leq T} \left\|\E\left[\int_\tau^T|\eta_s|\,ds 
    \,\middle|\,\sF_\tau\right]\right\|_\infty < \infty
  $$
is denoted by $\bmo^{1/2}$. Background and details on BMO-martingales can be found in Kazamaki~\cite{Kaz94}.
\ \\

\noindent{\bf Economic agents.} 
The economy consists of $2\leq I<\infty$ agents who have the choice between consumption of the real good, investment in the financial market, and working (or not) in the formal labor workforce in order to earn income. Each agent faces a decision problem to decide simultaneously how much to consume, invest, and work.  Consumption and labor occur at rates $c = (c_t)_{t\in[0,T]}$ and $L = (L_t)_{t\in[0,T]}$, respectively, per unit time and in a lump sum at time $T$.  

We model the agents' decision problems using Cobb-Douglas utility functions.  Each agent $i=1,\ldots, I$ has constant parameters for risk aversion $\alpha_i>0$, time preference $\rho_i\geq 0$, and labor/leisure preferences $u_i:(0,1)\rightarrow(0,\infty)$ that factor into the utility functions $U_i$ through
$$
  U_i(t, c, l) := -\exp\left(-\rho_i t-\alpha_i c\right)u_i(l),%l^{\beta_i}(1-l)^{\gamma_i},
   \quad t\in[0,T],\ c\in\R,\ l\in(0,1).
$$
The labor/leisure preferences are given by $u_i(l) = l^{\beta_i}(1-l)^{\gamma_i}$ for $l\in(0,1)$, where $\beta_i, \gamma_i < 0$ so that $u_i$ is strictly convex (so that $-u_i$ is strictly concave).  Under additional boundedness assumptions, it is possible to extend $u_i$ to the more typical case with $\beta_i = 0$, such as in Basak~\cite{B99JEDC}.  This extension is discussed below in Remark~\ref{rmk:ui}.  Consumption $c$ is allowed to be positive or negative, due to the exponential preferences for consumption.  The proportion of time spent on labor must be valued in $(0,1)$.  For every labor proportion $l\in(0,1)$, the proportion of time spent on leisure is $1-l\in(0,1)$.  

%%% introducing lambda %%%
When an agent works in the labor market, she earns a wage that contributes to her total wealth.  However, some of her wage is taxed, which is centrally collected and uniformly redistributed to everyone.  The uniform redistribution of taxed funds is what we refer to as {\it universal basic income}.  It is universal because everyone receives it and basic because nobody needs to work in the labor market in order to receive it.  Every agent will keep the proportion $\lambda_{keep}\in[0,1]$ of their earned labor income, and every agent will receive the proportion $\frac{\lambda_{ubi}}{I}$ of the aggregate income, where $\lambda_{ubi}\in[0, 1-\lambda_{keep}]$.  We note that $\lambda_{keep}+\lambda_{ubi}\leq 1$.  If $\lambda_{keep}+\lambda_{ubi}<1$, then some of the economy's earned labor income is lost, presumably as frictions or administrative costs that are not recycled back into the economy.

For each $i=1,\ldots, I$, agent $i$ has the ability to earn the wage $w_i = (w_{i,t})_{t\in[0,T]}$ per unit time and in a lump sum at time $T$.  The wage rate $w_i$ has dynamics
$$
  dw_{i} = \mu_{w_i} dt + \sigma_{w_i} dB, 
$$
where $\mu_{w_i}, \sigma_{w_i}\in\sS^\infty$ are progressively measurable and uniformly bounded.  The wage rate $w_i$ is assumed to be nonzero for all $t\in[0,T]$ with probability one.
 
  For each $j=1,\ldots, I$, if we denote agent $j$'s labor proportion as $L_j = (L_{j,t})_{t\in[0,T]}$, then agent $i$'s income rate at time $t\in[0,T]$ is
\begin{equation*}\label{eqn:income-formula}
  \lambda_{keep} w_{i,t} L_{i,t} + \frac{\lambda_{ubi}}{I}\sum_{j=1}^I w_{j,t} L_{j,t}.
\end{equation*}
\ \\

\noindent{\bf Labor response functions.} We employ a Nash equilibrium concept in which markets must clear and the agents will act optimally, but the agents will also take into consideration how they perceive others will respond to their labor choices when making their own choices.  Each agent earns income by working, but a portion of that income is siphoned off and redistributed to everybody as UBI.  Though the agents benefit from their own additional earnings, they also benefit from the labor earnings of others.  Thus, others' labor choices should have an impact on labor decisions.  We model this impact by using exogenously defined response functions, similar to Vayanos~\cite{V99RES} and Chen et.\,al.~\cite{CCLS23FS}.  Response functions provide one way of making others' impact precise.\label{comment5}
%, we introduce response functions to describe how others' choices have an impact.

For each $i, j=1, \ldots, I$ and $j\neq i$, the response function $\Lambda^i_j=\Lambda^i_{j,t}(l)$ describes how agent $i$ perceives that agent $j$'s labor choice changes according to agent $i$'s choice of labor $l\in (0,1)$ at time $t\in[0,T]$.  We assume that $\Lambda^i_j : [0,T]\times(0,1)\rightarrow\R$ is measurable.  Then, agent $i$'s perceived income rate,\label{comment2} based on agent $i$'s choice of labor for $l\in(0,1)$, is given by $\eps_i=\eps_i(l)$,
\begin{align}\label{def:perceived-income}
  \eps_{i,t}(l):= \lambda_{keep} w_{i,t} l &+ \frac{\lambda_{ubi}}{I}\left(w_{i,t} l + \sum_{j\neq i} w_{j,t}\Lambda^i_{j,t}(l) \right), \quad t\in[0,T].
\end{align}
Below, in Definition~\ref{def:response-fn}, we consider affine responses.
\ \\

\noindent{\bf The financial market.} 
The financial market consists of two securities:  a stock and an annuity.  Both securities are in one-net supply, and their prices are denominated in units of a single consumption good.  The annuity pays a constant dividend stream of $1$ per unit time and a lump sum of $1$ at time $T$.  The stock pays a dividend of $D_t$ per unit time and a lump sum of $D_T$ at time $T$.  The dividend stream $D$ is an It\^o process with dynamics given by
\begin{equation}\label{def:state-process}
  dD_t = \mu_{D}dt + \sigma_{D}dB, \quad D_0\in\R,
\end{equation}
where $\mu_D, \sigma_D\in\sS^\infty$.
%\begin{assumption}\label{asm:state-process}
%  The processes $\mu_D$ and $\sigma_D$ are progressively measurable and there exists a constant $M>0$ such that for all $t\in[0,T]$ and $\omega\in\Omega$,
%  $$
%    \left|\mu_{D,t}(\omega)\right|\leq M \quad \text{and} \quad
%    \frac{1}{M}\leq \sigma_{D,t}(\omega)\leq M.
%  $$
%\end{assumption}

The stock and annuity prices will be determined endogenously in equilibrium as continuous semimartingales $S=(S_t)_{t\in[0,T]}$ and $A=(A_t)_{t\in[0,T]}$, respectively.  Their terminal values are their dividends
$$
  S_T = D_T \quad \text{and} \quad A_T = 1.%, \quad \text{$\bP$-a.s.}
$$
Our equilibrium prices $A$ and $S$ have dynamics of the form
\begin{align}
%\begin{split}
  dA &= -dt + A\left(\mu_{A}\,dt+\sigma_{A}\,dB\right), \quad A_T=1,
  \label{def:A-dynamics} \\
  dS &= -D_t\,dt + \left(\mu_{S}+S\mu_{A}\right)dt + \left(\sigma_{S}+S\sigma_{A}\right)dB, \quad S_T = D_T.\label{def:S-dynamics}
%\end{split}
\end{align}
The price processes and their dynamics are outputs of equilibrium.  The coefficients $\mu_A$, $\sigma_A$, $\mu_S$, and $\sigma_S$ must be progressively measurable and will be proven to have sufficient regularity for \eqref{def:A-dynamics} and \eqref{def:S-dynamics} to be well-defined.

The annuity plays the role of a zero-coupon bond when a zero-coupon bond trades in an economy with only terminal consumption.  When the annuity's volatility is zero, it can replicate a locally riskless security.

The number of shares held in the stock over time is denoted $\pi = (\pi_t)_{t\in[0,T]}$, and the number of shares held in the annuity over time is denoted $\theta = (\theta_t)_{t\in[0,T]}$.  For given price processes $S$ and $A$ and a pair of positions $(\pi,\theta)$, the associated wealth process $X = (X_t)_{t\in[0,T]}$ is defined by
$$
  X := \theta A + \pi S.
$$
%Consumption and labor occur at rates $c = (c_t)_{t\in[0,T]}$ and $L = (L_t)_{t\in[0,T]}$, respectively, per unit time and in a lump sum at time $T$.  Each agent $i\in\{1,\ldots,I\}$ has the ability to earn the strictly positive wage $w_i = (w_{i,t})_{t\in[0,T]}$ per unit time and in a lump sum at time $T$.  The wage $w_i$ has dynamics
%$$
%  dw_{i} = \mu_{w_i} dt + \sigma_{w_i} dB_t, 
%$$
%where the drift and volatility processes, $\mu_{w_i}$ and $\sigma_{w_i}$, are progressively measurable and uniformly bounded.

The agents can trade in both the stock and the annuity and are endowed with $\pi_{i,0-} \in\R$ shares in the stock and $\theta_{i,0-}\in\R$ shares in the annuity, $i=1,\ldots, I$.  Since the stock and annuity are in one-net supply, we assume that $\sum_{i=1}^I \theta_{i,0-} = 1$ and $\sum_{i=1}^I \pi_{i,0-}=1$.

\begin{definition}\label{def:adm} Let $i\in\{1,\ldots,I\}$ be given.  Progressively measurable strategies $\pi$, $\theta$, $c$, and $L$ are {\it admissible for agent $i$} with the associated wealth process $X:=%X^{\pi,\theta,c,L}:=
\theta A + \pi S$ if
\begin{enumerate}
  \item \label{adm:1} $L$ is $(0,1)$-valued and
  \begin{align*}
    \int_0^T &\big(|X_t\mu_{A,t}| + |\pi_t\mu_{S,t}| + |\eps_{i,t}(L_t)| + |c_t| 
    + |X_t\sigma_{A,t}|^2 + |\pi_t\sigma_{S,t}|^2\big)dt <\infty; 
  \end{align*}

  \item \label{adm:dyn} The associated wealth process has initial wealth $X_0 = \pi_{i,0-}S_0 + \theta_{i,0-}A_0$ and satisfies the self-financing condition:
    \begin{align}\label{eqn:self-financing}
      dX = \big(X \mu_A + \pi \mu_S + \eps_i(L) - c\big)dt + \big(X\sigma_A + \pi\sigma_S\big)dB;
    \end{align}
  
  \item\label{adm:3} We require that $\sM^i(\pi,\theta,c,L)$ is a supermartingale, where $\xi_{i,t} := \tfrac{\partial}{\partial c} U_i (t, c_{i,t}, L_{i,t})$ and
    $$
      \sM^i_t(\pi,\theta,c,L) := \xi_{i,t}\left(X_t-X_{i,t}\right)
      + \int_0^t \xi_{i,s}\big((c_s-\eps_{i,s}(L_s))-(c_{i,s}-\eps_{i,s}(L_{i,s}))\big)ds.
    $$
    Here, $c_i$, $L_i$, and $X_i$ are defined below in \eqref{def:ci}, \eqref{def:Li}, \eqref{def:Xi}, respectively.  The terms $c_i$, $L_i$, and $X_i$ are defined only in terms of inputs, with $c_i$ and $X_i$ depending on the BSDE \eqref{def:bsde}, whose existence and uniqueness is established in Proposition~\ref{prop:bsde} in terms of only inputs.  
        
\end{enumerate}
If $(\pi,\theta, c, L)$ is admissible for agent $i$, then we write $(\pi,\theta, c,L)\in\sA_i$.
\end{definition}

Item~\ref{adm:3} of Definition~\ref{def:adm} introduces an agent-specific dual process $\xi_i$ for use in the verification proof of Theorem~\ref{thm:main}. When $\sigma_S\neq0$ $dt\otimes d\bP$-a.e., the stock spans the single Brownian source of risk, so the financial market is complete. Each $\xi_i$ is then a state price deflator, and the deflators are unique up to their initial values; in particular, $\xi_i/\xi_{i,0}$ is the same for every agent $i$.  This relaxed notion of admissibility allows for $\sigma_S =0$ on a set of positive measure.

\begin{definition}\label{def:eq}
  Let $\lambda_{keep}\in[0,1]$, $\lambda_{ubi}\in[0,1-\lambda_{keep}]$ be given. Strategies $\pi_i, \theta_i, c_i, L_i$, $i=1,\ldots, I$, response functions $\Lambda^i_j$, $i,j=1,\ldots, I$ with $i\neq j$, and continuous semimartingales $A$ and $S$ form an {\it equilibrium} if
\begin{enumerate}
  % optimality
  \item {\it Optimality:} For each $i=1,\ldots, I$, we have that $\pi_i$, $\theta_i$, $c_i$, and $L_i$ solve
  \begin{align}\label{def:optimization}
    \sup_{\pi, \theta, c, L} \E\left[\int_0^T U_i(t,c_t,L_t) dt + U_i\big(T,X_T+\eps_{i,T}(L_T), L_T\big)\right],
  \end{align}
  where the supremum is taken over $(\pi, \theta, c, L)\in\sA_i$.  Here, the perceived income rate process $\eps_i(L)$ is given by \eqref{def:perceived-income}.

  % consistency
  \item {\it Consistency of optimizers:} For all $t\in[0,T]$ and each $i, j = 1,\ldots, I$ with $i\neq j$, we have $\Lambda^i_{j,t}(L_{i,t}) = L_{j,t}$. 
  
  % on-eq perceptions
  \item {\it On-equilibrium perceptions align with reality:} \label{eq:perceptions}
  For all $t\in[0,T]$,
  $$
    \sum_{i=1}^I \eps_{i,t}(L_{i,t}) = \left(\lambda_{keep}+\lambda_{ubi}\right)\sum_{i=1}^I w_{i,t} L_{i,t}.
  $$
  \item {\it Market clearing:} For all $t\in[0,T]$, we have
  \begin{align*}
    \sum_{i=1}^I \pi_{i,t} = 1,\quad
    \sum_{i=1}^I \theta_{i,t} = 1, \quad \text{and} \quad
    \sum_{i=1}^I c_{i,t}= 1+D_t+\sum_{i=1}^I \eps_{i,t}(L_{i,t}).
  \end{align*}
\end{enumerate}

\end{definition}
The definition of equilibrium incorporates optimally choosing a proportion of time to work in the labor market with optimizing holdings in the financial market and running consumption, where income earned via wages from the labor market is treated as income available in the financial market.  In the optimization problem \eqref{def:optimization}, requiring $(\pi, \theta, c, L)\in\sA_i$ means that the income stream being optimized over is $\eps_i(L)$.

The labor perceptions $\Lambda_j^i(l)$ are agent $i$'s perception of $j$'s reaction to $i$'s choice of labor.  The consistency requirement $\Lambda^i_j(L_i)=L_j$ means that agent $i$ perceives that $j$ responds to $i$'s optimal labor choice with his own optimal labor choice.  The consistency requirement also guarantees that the on-equilibrium labor choice $\Lambda_j^i(L_i)$ is $(0,1)$-valued, whereas there is no such requirement off-equilibrium.  \label{comment1}We note that the optimal labor proportions $L_1, \ldots, L_I$ and the response functions $\Lambda^i_j$ are {\it outputs} of equilibrium, not inputs.  Candidate labor proportions and candidate response functions will be defined in Definitions~\ref{def:Li} and \ref{def:response-fn}, respectively.  Although we construct an equilibrium in Section~\ref{section:construction} using Definitions \ref{def:Li} and \ref{def:response-fn}, it does not preclude other possible labor proportions and response functions.

Our main result is Theorem~\ref{theorem:main-1}, which establishes the existence of an equilibrium.  The equilibrium construction is contained in Section~\ref{section:construction}, culminating in Theorem~\ref{thm:main}, which implements Theorem~\ref{theorem:main-1}.  The proof is in Section~\ref{section:proofs}.
\begin{theorem}\label{theorem:main-1}
  Assume that the wage rates are nonzero for all $t\in[0,T]$, $\bP$-a.s., and that for the dividend and wage rate dynamics coefficients, we have $\mu_D, \sigma_D, \mu_{w_i}, \sigma_{w_i} \in\sS^\infty$ for $i=1,\ldots, I$.  For any $\lambda_{keep}\in[0,1]$ and $\lambda_{ubi}\in[0,1-\lambda_{keep}]$, there exists an equilibrium.
\end{theorem}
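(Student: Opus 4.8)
The plan is to realize the construction outlined in Section~\ref{section:construction} and in the spirit of \cite{WZ20FS, W24FS}: reduce the $I$ coupled decision problems to a BSDE system whose quadratic part is scalar and whose remaining equations are linear, solve that system with existing BSDE theory, and then verify that the resulting prices, strategies, and response functions satisfy every clause of Definition~\ref{def:eq}.

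First I would fix affine response functions $\Lambda^i_{j,t}(l)=a^i_{j,t}+b^i_{j,t}\,l$ as in Definition~\ref{def:response-fn}, so that by \eqref{def:preceived-income} each perceived income becomes affine in the labor argument, $\eps_{i,t}(l)=\phi_{i,t}+\psi_{i,t}\,l$, with $\phi_i,\psi_i$ explicit functionals of the wages and of $a^i_j,b^i_j$; in particular $\psi_{i,t}=\lambda_{keep}w_{i,t}+\tfrac{\lambda_{ubi}}{I}\big(w_{i,t}+\sum_{j\neq i}w_{j,t}b^i_{j,t}\big)$ is bounded and does not depend on the financial coefficients. Writing each agent's state in the annuity num\'eraire $\hat X:=X/A$ and positing the value process $-\exp(-\rho_i t-\alpha_i(\hat X_t+Y_{i,t}))-\int_0^t e^{-\rho_i s-\alpha_i c_s}u_i(L_s)\,ds$ from Definition~\ref{def:adm}\eqref{adm:int}, the martingale optimality principle yields pointwise first-order conditions: an explicit optimal consumption $c_i$; an optimal portfolio $(\pi_i,\theta_i)$ that absorbs the idiosyncratic Brownian loading and leaves only a common market price of risk $\eta$; and a labor first-order condition $(\log u_i)'(L_{i,t})=\alpha_i\psi_{i,t}$. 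Since $u_i(l)=l^{\beta_i}(1-l)^{\gamma_i}$ with $\beta_i,\gamma_i<0$, the map $(\log u_i)'(l)=\beta_i/l+\gamma_i/(l-1)$ is a strictly increasing bijection of $(0,1)$ onto $\R$, so this has a unique solution $L_{i,t}\in(0,1)$ for every value of $\alpha_i\psi_{i,t}$; thus the equilibrium labor $L_i$ is pinned down first and is decoupled from the financial market.

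Next I would assemble the characterizing BSDE system. Substituting the consumption and labor first-order conditions into the $\hat X$-dynamics, summing the market-clearing identities of Definition~\ref{def:eq} with the initial positions summing to unit net supply, and using $\sum_i\eps_{i,t}(L_{i,t})=(\lambda_{keep}+\lambda_{ubi})\sum_i w_{i,t}L_{i,t}$, one obtains a single scalar quadratic BSDE for the pair $(\eta,\sigma_A)$ (equivalently, for the market-price-of-risk and annuity coefficients) with bounded generator coefficients and bounded terminal data built from $\sigma_D$ and the $\sigma_{w_i}$; this is solvable with a BMO martingale part by the classical theory for scalar quadratic BSDEs (cf.\ \cite{Kaz94} and the equilibrium BSDEs in \cite{WZ20FS, W24FS}). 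Given that solution, each $Y_i$ solves a \emph{linear} BSDE driven by the now-known $\eta$, with terminal data $Y_{i,T}=\eps_{i,T}(L_T)-\tfrac1{\alpha_i}\log u_i(L_T)$ of at most Gaussian type, hence solvable in every $L^p$. From the solution I read off $\mu_A,\sigma_A$ (hence $A$ via the linear terminal-value SDE \eqref{def:A-dynamics}), $\mu_S,\sigma_S$ (hence $S$ via \eqref{def:S-dynamics}), the consumptions $c_i$ and labor $L_i$; I then fix $(\pi_i,\theta_i)$ by imposing stock- and annuity-clearing, and choose the coefficients $a^i_j$ so that $\Lambda^i_{j,t}(L_{i,t})=L_{j,t}$ — the alignment identity of Definition~\ref{def:eq}\eqref{eq:perceptions} then holds automatically, since under consistency $\sum_i\sum_{j\neq i}w_{j,t}\Lambda^i_{j,t}(L_{i,t})=(I-1)\sum_j w_{j,t}L_{j,t}$.

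Finally comes verification. I would check: (i) admissibility, i.e.\ the integrability bounds in Definition~\ref{def:adm}\eqref{adm:1} and that the local-martingale part of the process in \eqref{adm:int} is a true martingale, using boundedness of $L_i$, the BMO property of the drivers, and $\mu_D,\sigma_D,\mu_{w_i},\sigma_{w_i}\in\sS^\infty$; (ii) optimality, via the martingale optimality principle — for any competing $(\pi,\theta,c)\in\sA_i(L)$ and any $(0,1)$-valued $L$, the candidate value process is a supermartingale because the associated Hamiltonian is concave in $(\pi,\theta,c,L)$ (joint concavity of $-u_i(l)\,e^{-\alpha_i c}$, affineness of $\eps_i$ in $l$, and completing the square in the portfolio), whence $\E[V_T]\le V_0$ with equality for the candidate; (iii) market clearing and (iv) consistency/alignment, which hold by construction. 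I expect the main obstacle to be the passage in the second and third paragraphs: arranging that the labor first-order condition produces an $L_i$ regular and integrable enough that, after substitution, the $Y_i$-equations retain the structure — a scalar quadratic part plus linear remainders with controlled terminal data — covered by existing BSDE results, while simultaneously confirming that this same $L_i$ is reproduced by admissible affine response functions compatible with all of Definition~\ref{def:eq}.
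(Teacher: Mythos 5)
Your plan follows essentially the same route as the paper: affine Nash responses, a labor first-order condition $u_i'(L_i)/u_i(L_i)=\alpha_i\psi_i$ that decouples the labor choice from the financial market (this is exactly Definition~\ref{def:Li}), a scalar quadratic BSDE for the annuity obtained by aggregating the clearing conditions, linear BSDEs for the individual $Y_i$, and verification by the martingale optimality principle (your joint-concavity argument for the Hamiltonian is equivalent to the paper's Fenchel-inequality decomposition in Lemma~\ref{lemma:mg}).

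Both of your BSDE-solvability claims, however, have genuine gaps as stated, and they are exactly where the paper does its real work. First, the quadratic equation for $a=\log A$ does \emph{not} have a bounded generator: the unit dividend of the annuity produces the term $-\exp(-a)$ in the driver of \eqref{def:bsde}, which has exponential growth in the $y$-variable and is not covered by off-the-shelf scalar quadratic BSDE results. The paper truncates this term to $-\exp(-\max(a,-N))$, applies Tevzadze to the truncated equation, derives an $N$-independent lower bound on $a^{(N)}$ from a supermartingale estimate, and then removes the truncation; that estimate in turn needs the uniform bounds on $\mu_\eps,\mu_\sL,\sigma_\eps,\sigma_\sL$ of Proposition~\ref{prop:Li}, which are not automatic because $w_i$ is unbounded and rely on the cancellation $w_i/\Psi_i'(L_i)=\Psi_i(L_i)/\big(\alpha_i\lambda\Psi_i'(L_i)\big)$ with $\Psi_i/\Psi_i'$ bounded on $(0,1)$. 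Second, the assertion that each $Y_i$ solves a linear BSDE ``solvable in every $L^p$'' is not justified: the driver contains $\alpha_i Z_i\big(\kappa-Z_a\big)$ and $-\tfrac12\big(\kappa-Z_a\big)^2$, where $Z_a$ is only BMO, not bounded, so classical linear BSDE theory with $L^p$ data does not apply. The paper first shifts to $\overline Y_i=Y_i-\eps_i(L_i)+\sL_i$, making the terminal condition zero and the inhomogeneity bounded plus BMO-squared, and then invokes the Jackson--\v{Z}itkovi\'c results for linear BSDEs with sliceable BMO coefficients to get an $\sS^\infty\times\bmo$ solution (which is also what makes the admissibility and martingale verification go through). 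You correctly flag this passage as the main obstacle, but closing it requires these specific tools (or a Girsanov argument with reverse-H\"older estimates), not standard $L^p$ theory.
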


\section{Equilibrium construction}\label{section:construction}

To construct an equilibrium, we begin by making two conjectures.  First, we conjecture the agents' optimal labor proportions using only model input parameters.  Second, we conjecture that affine response functions are sufficient for constructing and proving the existence of an equilibrium.  Given the candidate optimal labor proportions and their corresponding income streams, we  construct the equilibrium financial market.  Finally, using these conjectures, we prove in Section~\ref{section:proofs} that equilibrium exists by solving a characterizing BSDE system and performing verification.\\

\noindent{\bf The labor market.}  The candidate labor proportion processes $L_1, \ldots, L_I$ are defined below and depend only on input parameters and a constant free parameter $\delta\in\R$, called the influence parameter.
\begin{definition}\label{def:Li}
  For $\delta\in\R$ and for each $i=1, \ldots, I$, the process $L_i=(L_{i,t})_{t\in[0,T]}$ is chosen for each $t\in[0,T]$ as the unique value in $(0,1)$ such that 
  $$
    \alpha_i w_{i,t} \left(\lambda_{keep}+\frac{\lambda_{ubi}}{I}\big(1+\delta(I-1)\big)\right) = \frac{u_i'(L_{i,t})}{u_i(L_{i,t})}.%\frac{\beta_i}{L_{i,t}}-\frac{\gamma_i}{1-L_{i,t}}.
  $$
\end{definition}

For each $i=1,\ldots, I$, the process $L_i$ is an It\^o process with dynamics
$$
  d L_i = \mu_{L_i} dt + \sigma_{L_i} dB.
$$
Furthermore, we denote the processes $\sL_i$ and $\sL_\Sigma$ by $\sL_i := \frac{1}{\alpha_i} \log u_i(L_i)$ and $\sL_\Sigma := \sum_{i=1}^I \sL_i$.  $\sL_i$ and $\sL_\Sigma$ are It\^o processes with dynamics denoted by
$$
  d\sL_i = \mu_{\sL_i} dt + \sigma_{\sL_i} dB
  \quad \text{and} \quad
  d\sL_\Sigma = \mu_\sL dt + \sigma_\sL dB.
$$

%For $l\in(0,1)$, we define $\Psi_i(l) := \frac{u'_i(l)}{u_i(l)}$ and denote
%$$
%  \lambda = \lambda_{keep} + \frac{\lambda_{ubi}}{I}(1+\delta(I-1)).
%$$
%Then, the process $L_i$ satisfies $\Psi_i(L_i) = \alpha_i\lambda w_i$ and has dynamics
%$dL_i = \mu_{L_i}dt + \sigma_{L_i}dB$ where
%$$
%  %dL_i = \left(
%  \mu_{L_i} = \frac{\alpha_i\lambda \mu_{w_i}}{\Psi'(L_i)} - \frac{\alpha_i^2\lambda^2 \sigma_{w_i}^2}{2}\cdot \frac{\Psi_i''(L_i)}{\left(\Psi_i'(L_i)\right)^3}, 
%  %\right)dt + 
%  \quad \text{ and } \quad 
%  \sigma_{L_i} = \frac{\alpha_i\lambda \sigma_{w_i}}{\Psi'(L_i)}.%dB
%$$
%Since $\mu_{w_i}$ and $\sigma_{w_i}$ are uniformly bounded, $\lim_{l\rightarrow 0+} \frac{1}{\Psi'(l)} = \lim_{l\rightarrow 1-} \frac{1}{\Psi'(l)} = 0$ and $\lim_{l\rightarrow 0+} \frac{\Psi''_i(l)}{(\Psi'(l))^3} = \lim_{l\rightarrow 1-} \frac{\Psi''_i(l)}{(\Psi'(l))^3} = 0$, we note that $\mu_{L_i}$ and $\sigma_{L_i}$ are uniformly bounded.

Next, we define candidate labor response functions, which are affine in $l$.
\begin{definition}\label{def:response-fn}
  For each $i, j=1,\ldots, I$ with $j\neq i$ and $\delta\in\R$, $\Lambda_j^i: [0,T]\times(0,1)\rightarrow \R$ is defined by
  $$
    \Lambda_{j,t}^i(l) := \delta\frac{w_{i,t}}{w_{j,t}}(l-L_{i,t}) + L_{j,t}, \quad l\in(0,1).
  $$
\end{definition}
We recall that the wage rates are nonzero for all $t\in[0,T]$, $\bP$-a.s., and we define $\Lambda^i_{j,t}(l) = L_{j,t}$ when $w_{j,t}=0$.  For each $t\in[0,T]$, $\Lambda^i_{j,t}$ is affine in $l$.  The response function $\Lambda^i_{j,t}(l)$ describes how agent $i$ perceives agent $j$ to respond to $i$'s choice of labor $l$.  The influence parameter $\delta$ is chosen to be the same across all agents, and the sign of $\delta$ describes the overall attitude of perceptions of the economic agents.  We call $\delta>0$ the {\it greater good} scenario because the agents perceive that the other agents respond positively to their hard work (labor).  Everybody's perceived income increases in the greater good scenario.  Conversely, we call the $\delta<0$ case the {\it freeloader} scenario because the agents perceive a negative response from others, and everybody's perceived income suffers as a result.  Finally, $\delta = 0$ is the {\it competitive} scenario, in which the agents perceive no response to the labor choices of others.

For a fixed $i=1,\ldots, I$, using the labor response functions $\Lambda^i_{j,t}$ in Definition~\ref{def:response-fn} and by \eqref{def:perceived-income}, agent $i$'s perceived income for $l\in(0,1)$ and $t\in[0,T]$ is
\begin{align}\label{epsi}
  \eps_{i,t}(l) = w_{i,t} \left(\lambda_{keep}+\frac{\lambda_{ubi}}{I}\big(1+\delta(I-1)\big)\right) l + \frac{\lambda_{ubi}}{I}\left(\sum_{j\neq i} w_{j,t}L_{j,t} - \delta(I-1)w_{i,t}L_{i,t}\right).
\end{align}

Based on the processes $L_i$ from Definition~\ref{def:Li} for agent $i$'s candidate optimal labor proportion, we denote the candidate aggregate income stream by $\eps_\Sigma:=\sum_{i=1}^I \eps_{i}(L_{i})$.  Since $\eps_i(L_i)$ and $\eps_\Sigma$ are It\^o processes, we write their dynamics as
$$
  d\eps_i(L_i) = \mu_{\eps_i} dt + \sigma_{\eps_i} dB
  \quad \text{and} \quad
  d\eps_{\Sigma} = \mu_{\eps} dt + \sigma_{\eps} dB.%, \quad t\in[0,T].
$$
%For $i=1,\ldots, I$, we note that $\mu_{\eps_i}$, $\mu_\eps$, $\sigma_{\eps_i}$, and $\sigma_\eps$ are in $\sS^\infty$ since $\mu_{w_i}, \sigma_{w_i}\in\sS^\infty$ for each $1\leq i\leq I$ and by Proposition~\ref{prop:Li}.

Proposition \ref{prop:Li} establishes uniform boundedness of several drift and volatility terms related to labor and income, which will be needed for the construction and verification of equilibrium.  The proof of Proposition~\ref{prop:Li} is in Section~\ref{section:proofs} below.
\begin{proposition}\label{prop:Li} 
For each $i=1,\ldots, I$, assume that the wage rates are nonzero for all $t\in[0,T]$, $\bP$-a.s., and that for the wage rate dynamics coefficients, we have $\mu_{w_i}, \sigma_{w_i} \in\sS^\infty$. The progressively measurable processes $\mu_{L_i}$, $\mu_{\sL_i}$, $\mu_{\eps_i}$, $\sigma_{L_i}$, $\sigma_{\sL_i}$, and $\sigma_{\eps_i}$ are uniformly bounded.  Moreover, the drift and volatility dynamics coefficients of $L_i w_i$ are also uniformly bounded.
\end{proposition}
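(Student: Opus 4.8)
The plan is to exhibit $L_{i,t}$ as a smooth deterministic function of the single driver $w_{i,t}$ and then read off every claimed bound from It\^o's formula, thereby reducing all assertions to the boundedness of a handful of explicit one‑variable functions on $(0,1)$. Set $g_i:=(\log u_i)'$, so that $g_i(l)=\beta_i l^{-1}-\gamma_i(1-l)^{-1}$, and $k_i:=\alpha_i\big(\lambda_{keep}+\tfrac{\lambda_{ubi}}{I}(1+\delta(I-1))\big)$. Since $\beta_i,\gamma_i<0$, a direct computation gives $g_i'(l)=-\beta_i l^{-2}-\gamma_i(1-l)^{-2}>0$ on $(0,1)$ with $g_i'(l)\to+\infty$ as $l\to 0^+$ and as $l\to 1^-$; hence $g_i$ is a $C^\infty$ increasing bijection of $(0,1)$ onto $\R$ and $m_i:=\inf_{(0,1)}g_i'>0$. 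Consequently $\phi_i:=g_i^{-1}(k_i\,\cdot\,)\in C^\infty(\R;(0,1))$ is well defined, the defining relation of Definition~\ref{def:Li} reads $g_i(L_{i,t})=k_iw_{i,t}$, and $L_{i,t}=\phi_i(w_{i,t})$; this also confirms that $L_i$, $L_iw_i$, $\sL_i=\tfrac1{\alpha_i}\log u_i(L_i)$, and (via \eqref{epsi}) $\eps_i(L_i)$ are It\^o processes on $[0,T]$, since $w_i$ is and $\phi_i$ is smooth. The degenerate case $k_i=0$ is trivial: then $L_i\equiv g_i^{-1}(0)$ is constant and all the coefficients below vanish, so assume $k_i\neq 0$ henceforth.

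The crux is the claim that the functions $1/g_i'$, $g_i/g_i'$, $g_i''/(g_i')^3$, and $g_ig_i''/(g_i')^3$ are all bounded on $(0,1)$; each is continuous there, so it suffices to compute the limits at $0^+$ and $1^-$, using $g_i''(l)=2\beta_i l^{-3}-2\gamma_i(1-l)^{-3}$ together with the leading behavior $g_i\sim\beta_i l^{-1}$, $g_i'\sim-\beta_i l^{-2}$, $g_i''\sim2\beta_i l^{-3}$ as $l\to0^+$ and the symmetric $\gamma_i,1-l$ behavior as $l\to1^-$. One finds $1/g_i'$ tends to $0$ at both endpoints (and is $\le 1/m_i$ throughout), and the other three also tend to $0$. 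I expect this endpoint bookkeeping---keeping the signs of $\beta_i,\gamma_i$ straight---to be essentially the only real work. Differentiating $g_i(\phi_i(x))=k_ix$ gives $\phi_i'(x)=k_i/g_i'(\phi_i(x))$ and $\phi_i''(x)=-k_i^2\,g_i''(\phi_i(x))/g_i'(\phi_i(x))^3$, both bounded on $\R$ because $\phi_i$ has range in $(0,1)$; substituting $x=g_i(\phi_i(x))/k_i$ then gives $x\phi_i'(x)=(g_i/g_i')(\phi_i(x))$ and $x\phi_i''(x)=-k_i\,(g_ig_i''/(g_i')^3)(\phi_i(x))$, again bounded. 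This substitution is precisely the step that copes with the fact that $w_i$ itself is unbounded.

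It remains to apply It\^o's formula. For $L_i=\phi_i(w_i)$: $\sigma_{L_i}=\phi_i'(w_i)\sigma_{w_i}$ and $\mu_{L_i}=\phi_i'(w_i)\mu_{w_i}+\tfrac12\phi_i''(w_i)\sigma_{w_i}^2$, which are bounded since $\phi_i',\phi_i''$ are bounded functions and $\mu_{w_i},\sigma_{w_i}\in\sS^\infty$. For $L_iw_i=F_i(w_i)$ with $F_i(x):=x\phi_i(x)$: $F_i'(x)=\phi_i(x)+x\phi_i'(x)$ and $F_i''(x)=2\phi_i'(x)+x\phi_i''(x)$ are bounded by the previous paragraph, so $L_iw_i$ has bounded drift and volatility. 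For $\sL_i=G_i(w_i)$ with $G_i:=\tfrac1{\alpha_i}(\log u_i)\circ\phi_i$: using $g_i(\phi_i(x))=k_ix$ one computes $G_i'(x)=\tfrac{k_i}{\alpha_i}(g_i/g_i')(\phi_i(x))$ and $G_i''(x)=\tfrac{k_i^2}{\alpha_i}\big(1/g_i'-g_ig_i''/(g_i')^3\big)(\phi_i(x))$, both bounded, so $\mu_{\sL_i},\sigma_{\sL_i}$ are bounded. Finally, putting $l=L_{i,t}$ in \eqref{epsi} and simplifying the coefficient of $w_{i,t}L_{i,t}$ collapses it to $\eps_{i,t}(L_{i,t})=\big(\lambda_{keep}+\tfrac{\lambda_{ubi}}{I}\big)w_{i,t}L_{i,t}+\tfrac{\lambda_{ubi}}{I}\sum_{j\neq i}w_{j,t}L_{j,t}$, a fixed linear combination of the processes $w_jL_j$, $j=1,\dots,I$; since each $w_jL_j$ has bounded drift and volatility by the case already treated (applied with index $j$), so does $\eps_i(L_i)$, giving bounded $\mu_{\eps_i},\sigma_{\eps_i}$ and completing all the assertions.
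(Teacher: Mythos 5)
Your proposal is correct and follows essentially the same route as the paper's proof: both reduce everything to the boundedness on $(0,1)$ of the four functions $1/\Psi_i'$, $\Psi_i/\Psi_i'$, $\Psi_i''/(\Psi_i')^3$, and $\Psi_i\Psi_i''/(\Psi_i')^3$ (your $g_i$ is the paper's $\Psi_i=u_i'/u_i$), checked via their vanishing limits at the endpoints, and both use the defining relation $\Psi_i(L_i)=\alpha_i\lambda w_i$ to absorb the unbounded factor $w_i$ into a bounded function of $L_i$. The only cosmetic difference is that you apply It\^o's formula to the explicit composites $\phi_i(w_i)$, $w_i\phi_i(w_i)$, $G_i(w_i)$, whereas the paper differentiates the implicit relation and substitutes afterward; the resulting coefficient formulas are identical.
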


We construct a financial market and determine the agents' consumption and investment strategies in terms of the optimal labor proportion and income streams with the help of the solution $\big((a, Z_a), (Y_i, Z_i)_{1\leq i\leq I}\big)$ to the BSDE system
\begin{align}
  \begin{split}\label{def:bsde}
    da &= Z_a\, dB + \left(\rho_\Sigma +\alpha_\Sigma(\mu_D+\mu_\eps-\mu_\sL) -\frac{1}{2}\big(\alpha_\Sigma(\sigma_D+\sigma_\eps-\sigma_\sL)-Z_a\big)^2- \exp(-a)\right)dt,\\
    dY_i &= Z_i\,dB + \frac{1}{\alpha_i}\left(-\rho_i + \frac{1+a+\alpha_i(Y_i-\eps_i(L_i)+\sL_i)}{\exp(a)}-\frac{1}{2}\big(\alpha_\Sigma(\sigma_D+\sigma_\eps-\sigma_\sL)-Z_a\big)^2\right.\\
    &\quad\quad\quad\quad\quad\quad\left.\phantom{\frac{(Y_i)}{A}}+\alpha_iZ_i\big(\alpha_\Sigma(\sigma_D+\sigma_\eps-\sigma_\sL)-Z_a\big)\right)dt,\\
  a_T &= 0, \quad Y_{i,T}= \eps_{i,T}(L_{i,T})-\sL_{i,T}, \quad 1\leq i\leq I,
  \end{split}
\end{align}
where $\alpha_\Sigma := \left(\sum_{i=1}^I \frac{1}{\alpha_i}\right)^{-1}$ and $\rho_\Sigma:=\alpha_\Sigma\sum_{i=1}^I \frac{\rho_i}{\alpha_i}$ are constants.  %The process $\sL_i$ is given by $\sL_{i,t} := \frac{1}{\alpha_i}\log(u_i(L_{i,t}))$ for $t\in[0,T]$ and $1\leq i\leq I$.  We define the process $\sL_\Sigma$ by $\sL_{\Sigma,t} := \sum_{i=1}^I \sL_{i,t}$.  Since each function $u_i$ is three times continuously differentiable, $\sL_\Sigma$ has It\^o dynamics
%$$
%  d\sL_\Sigma = \mu_\sL dt + \sigma_\sL dB.
%$$
The  BSDE system \eqref{def:bsde} is $(I+1)$-dimensional, quadratic, and coupled.  However, the $(a, Z_a)$ equation decouples from the system, and the coupling is not quadratic.  \label{comment3}The processes $Y_i$ represent agent $i$'s conditional certainty equivalent value.  At $t$, $Y_{i,t}$ is conditionally how much agent $i$ would require at time $t$ to achieve the equivalent expected utility to the optimal consumption and investment rates.

\begin{remark}\label{comment4}
  The BSDE is affected by the influence parameter $\delta$ through several of its input parameters.  The influence parameter is introduced through $L_i$ in Definition~\ref{def:Li} and $\Lambda^i_j$ in Definition~\ref{def:response-fn}.  Therefore, it affects $\eps_i(L_i)$, $\eps_\Sigma$, $\mu_\eps$, $\sigma_\eps$, $\sL_i$, $\sL_\Sigma$, $\mu_\sL$, and $\sigma_\sL$.  Unless the wage rates are all deterministic, $\delta$ will impact every BSDE output.
\end{remark}

Proposition~\ref{prop:bsde} establishes the existence and uniqueness of a solution to BSDE system \eqref{def:bsde}, which will pave the way for the financial market's equilibrium construction.  The proof is provided below in Section~\ref{section:proofs}.  
\begin{proposition}\label{prop:bsde}
  Assume that the wage rates are nonzero for all $t\in[0,T]$, $\bP$-a.s., and that for the dividend and wage rate dynamics coefficients, we have $\mu_D, \sigma_D, \mu_{w_i}, \sigma_{w_i} \in\sS^\infty$ for $i=1,\ldots, I$.  Then, there exists a unique solution $\big((a, Z_a), (Y_i, Z_i)_{1\leq i\leq I}\big)$ to the BSDE system \eqref{def:bsde} with $(a,Z_a)\in\sS^\infty\times\bmo$ and $Y_i - \eps_i(L_i) + \sL_i\in\sS^\infty$ and $Z_i\in\bmo$.% with $(a,Z_a)\in\sS^\infty\times\bmo$ and $(\overline{Y_i}, \overline{Z_i})\in \sS^\infty\times\bmo$, $1\leq i\leq I$, where $\overline{Y_i} = Y_i - \eps_i(L_i)+\sL_i$, and $\overline{Z_i} = Z_i + \sigma_{\eps_i} + \sigma_{\sL_i}$.
\end{proposition}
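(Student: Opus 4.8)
The plan is to exploit the two structural features noted right after \eqref{def:bsde}: the $(a,Z_a)$ equation decouples from the rest, and the dependence of the $Y_i$ equations on $(a,Z_a)$ (and on one another) enters only through coefficients that are bounded or $\bmo$, never through the quadratic terms. So I would solve the system in two steps.

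\emph{Step 1 (the decoupled equation).} Set $\Theta:=\alpha_\Sigma(\sigma_D+\sigma_\eps-\sigma_\sL)$; by Proposition~\ref{prop:Li} and the hypotheses, $\Theta\in\sS^\infty$ and $\mu_D+\mu_\eps-\mu_\sL\in\sS^\infty$. The key observation is that the exponential substitution $P:=\exp(a)$ linearizes the $a$-equation: the It\^o correction $\tfrac12 Z_a^2$ cancels the quadratic part of $-\tfrac12(\Theta-Z_a)^2$, and $\exp(a)\cdot(-\exp(-a))=-1$, so $(P,\zeta)$ with $\zeta:=PZ_a$ solves the \emph{linear} BSDE
\begin{equation*}
  dP=\zeta\,dB+\bigl(\beta_tP+\Theta_t\zeta-1\bigr)\,dt,\qquad P_T=1,
\end{equation*}
with $\beta_t:=\rho_\Sigma+\alpha_\Sigma(\mu_{D,t}+\mu_{\eps,t}-\mu_{\sL,t})-\tfrac12\Theta_t^2\in\sS^\infty$, and conversely a solution of this equation with $P$ bounded below by a positive constant yields $(a,Z_a)=(\log P,\zeta/P)$ solving the original. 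I would then apply the standard theory of linear BSDEs with bounded coefficients: with the integrating factor $\Gamma_t:=\exp\bigl(-\int_0^t\beta_s\,ds\bigr)\,\mathcal{E}\bigl(-\int_0^{\cdot}\Theta_s\,dB_s\bigr)_t$, the unique solution is $P_t=\E\bigl[\Gamma_T/\Gamma_t+\int_t^T\Gamma_s/\Gamma_t\,ds\mid\sF_t\bigr]$; since $\beta,\Theta$ are bounded ($\mathcal{E}(-\int\Theta\,dB)$ is a true martingale), this gives $e^{-\|\beta\|_\infty T}\le P_t\le e^{\|\beta\|_\infty T}(1+T)$, so $P$, hence $a=\log P$ and $e^{-a}$, lie in $\sS^\infty$. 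An It\^o estimate on $P^2$ (absorbing $\Theta\zeta$ by Young's inequality and using boundedness of $P,\beta,\Theta$) gives $\zeta\in\bmo$ and thus $Z_a=\zeta/P\in\bmo$. Uniqueness in $\sS^\infty\times\bmo$ follows by running the integrating-factor computation on the difference of two solutions, which is a uniformly integrable martingale vanishing at $T$.

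\emph{Step 2 (the coupled equations).} With $(a,Z_a)\in\sS^\infty\times\bmo$ fixed, the $Y_i$ equations are mutually decoupled, and I would solve each one via the substitution $\tilde Y_i:=Y_i-\eps_i(L_i)+\sL_i$, $\tilde Z_i:=Z_i-\sigma_{\eps_i}+\sigma_{\sL_i}$, which is dictated by the unbounded terminal value of $Y_i$. Since $\mu_{\eps_i},\sigma_{\eps_i},\mu_{\sL_i},\sigma_{\sL_i}\in\sS^\infty$ by Proposition~\ref{prop:Li}, we have $\tilde Z_i\in\bmo\iff Z_i\in\bmo$, the terminal condition becomes $\tilde Y_{i,T}=0$, and a direct computation shows that $(\tilde Y_i,\tilde Z_i)$ solves the \emph{linear} BSDE
\begin{equation*}
  d\tilde Y_i=\tilde Z_i\,dB+\bigl(e^{-a}\tilde Y_i+(\Theta-Z_a)\tilde Z_i+g_i\bigr)\,dt,\qquad \tilde Y_{i,T}=0,
\end{equation*}
where $g_i$ is a bounded process plus terms linear and quadratic in $\Theta-Z_a$ (coming from $-\tfrac{1}{2\alpha_i}(\Theta-Z_a)^2$ and $(\sigma_{\eps_i}-\sigma_{\sL_i})(\Theta-Z_a)$), hence not bounded but $\bmo$-integrable. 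Since $\Theta-Z_a\in\bmo$, Kazamaki's theorem (see Kazamaki~\cite{Kaz94}) gives that $\mathcal{E}\bigl(\int_0^{\cdot}(\Theta_s-Z_{a,s})\,dB_s\bigr)$ is uniformly integrable, so the integrating factor $\Xi_t:=\exp\bigl(-\int_0^te^{-a_s}\,ds\bigr)\,\mathcal{E}\bigl(-\int_0^{\cdot}(\Theta_s-Z_{a,s})\,dB_s\bigr)_t$ produces the representation $\tilde Y_{i,t}=-\E\bigl[\int_t^T(\Xi_s/\Xi_t)\,g_s\,ds\mid\sF_t\bigr]$. The crucial point is that under the associated change of measure the martingale $\int(\Theta-Z_a)\,dB$ stays BMO — again by Kazamaki, since it and its negative both lie in $\bmo$ — so the relevant conditional expectations of $\int_\tau^T(\Theta_s-Z_{a,s})^2\,ds$ are bounded uniformly in $\tau$; together with the boundedness of $e^{-a}$ and a Cauchy--Schwarz bound for the linear-in-$(\Theta-Z_a)$ term of $g_i$, this forces $\tilde Y_i\in\sS^\infty$, i.e.\ $Y_i-\eps_i(L_i)+\sL_i\in\sS^\infty$. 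An It\^o estimate on $\tilde Y_i^2$, using $\tilde Y_i\in\sS^\infty$ and $Z_a\in\bmo$, then gives $\tilde Z_i\in\bmo$, hence $Z_i\in\bmo$. Existence is obtained by verifying that the process defined by the representation solves the BSDE (its martingale part furnishing $\tilde Z_i$ via martingale representation), and uniqueness in the stated class again by the integrating-factor argument on the difference of two solutions; undoing the substitution completes the proof.

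\emph{Main obstacle.} The difficulty is not the quadratic structure — the $a$-equation linearizes under $P=e^a$ and the $Y_i$-equations turn out to be linear — but the absence of boundedness in the data: both the terminal value $\eps_{i,T}(L_{i,T})-\sL_{i,T}$ and the source term $(\Theta-Z_a)^2$ are unbounded. Handling them rests on two inputs: (i) Proposition~\ref{prop:Li}, which makes the drift and volatility coefficients of $\eps_i(L_i)$ and $\sL_i$ bounded, so that $\tilde Y_i=Y_i-\eps_i(L_i)+\sL_i$ solves a BSDE with zero terminal value and a well-behaved drift; and (ii) the stability of the BMO property under the change of measure generated by $\int(\Theta-Z_a)\,dB$, which is exactly what turns the probabilistic representation of $\tilde Y_i$ into a bounded quantity rather than merely an integrable one. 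I expect step (ii) — pinning down the uniform bound on $\tilde Y_i$ — to be the subtlest part to make fully rigorous.
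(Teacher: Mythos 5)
Your proposal is correct and reaches the same two-step architecture as the paper — solve the decoupled $(a,Z_a)$ equation first, then pass to the shifted unknowns $\overline{Y}_i = Y_i-\eps_i(L_i)+\sL_i$, $\overline{Z}_i = Z_i-\sigma_{\eps_i}+\sigma_{\sL_i}$ (your $\tilde Y_i,\tilde Z_i$), which is exactly the paper's secondary system \eqref{def:bsde2} — but the technical execution of both steps is genuinely different. For the $a$-equation, the paper truncates the $-\exp(-a)$ term, invokes Proposition~2 of Tevzadze~\cite{T08SPA} for each truncated quadratic BSDE, and then removes the truncation via a uniform supermartingale lower bound on $a^{(N)}$; you instead observe that the quadratic term has coefficient exactly $-\tfrac12$ in $Z_a^2$, so the Cole--Hopf substitution $P=e^{a}$ linearizes the equation outright and yields the explicit two-sided bounds on $P=A$ (this is essentially the annuity-pricing formula under the deflator measure, and it is arguably cleaner and more informative than the truncation route). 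For the $\overline{Y}_i$-equations, the paper cites Jackson and \v{Z}itkovi\'c~\cite{JZ22SICON} (sliceability of $e^{-a}$, their Proposition~2.4 and Corollary~2.10) to handle a linear BSDE whose $z$-coefficient $\alpha_\Sigma(\sigma_D+\sigma_\eps-\sigma_\sL)-Z_a$ and source term are only $\bmo$ rather than bounded; you prove the needed special case directly via the integrating factor and Kazamaki's theorem that BMO is preserved under the Girsanov change of measure generated by a BMO martingale, which is precisely what makes $\E^{\bQ}\bigl[\int_\tau^T(\Theta_s-Z_{a,s})^2\,ds\mid\sF_\tau\bigr]$ uniformly bounded and hence $\tilde Y_i\in\sS^\infty$. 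Both routes are valid; yours is self-contained and explicit where the paper leans on external existence results, at the cost of having to carry out the martingale-representation and class-D verifications yourself. The only points to make fully rigorous are routine: a localization (or class-D) argument when integrating the It\^o expansions of $P^2$ and $\tilde Y_i^2$ to conclude the $\bmo$ bounds, and the $L^1(\bQ)$-integrability of $\int_0^T|g_s|\,ds$ needed to define the representation of $\tilde Y_i$, which follows from the same BMO-stability fact.
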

%The proof of Proposition~\ref{prop:bsde} is postponed until Section~\ref{section:proofs}. 
Let $\big((a, Z_a), (Y_i, Z_i)_{1\leq i\leq I}\big)$ be the unique solution to the BSDE system \eqref{def:bsde} guaranteed by Proposition \ref{prop:bsde}.  Using this solution, we construct our remaining candidate equilibrium quantities:  the stock price, the annuity price, optimal investment policies, optimal wealth processes, and optimal consumption policies. First, we denote $Y_\Sigma:=\sum_{i=1}^I Y_i$ and $Z_\Sigma:=\sum_{i=1}^I Z_i$ and observe that $(Y_\Sigma, Z_\Sigma)$ is the unique solution to the BSDE
\begin{align*}
  dY_\Sigma &= Z_\Sigma dB + \frac{1}{\alpha_\Sigma}\left(-\rho_\Sigma + \frac{1+a+\alpha_\Sigma(Y_\Sigma-\eps_\Sigma+\sL_\Sigma)}{A}-\frac{\alpha_\Sigma^2}{2}\left(\sigma_D+\sigma_\eps-\sigma_\sL\right)^2\right.\\
    &\quad\quad\quad\quad\quad\quad\left.\phantom{\frac{Y_\Sigma}{A}}+\alpha_\Sigma^2 Z_\Sigma\left(\sigma_D+\sigma_\eps-\sigma_\sL-\frac{Z_a}{\alpha_\Sigma}\right)\right)dt,
\end{align*}
with $Y_{\Sigma,T} = \eps_{\Sigma,T}-\sL_{\Sigma, T}$, where $Y_\Sigma - \eps_\Sigma + \sL_\Sigma \in \sS^\infty$ and $Z_\Sigma \in \bmo$.

Next, we define the annuity and stock price processes by
\begin{equation}\label{def:prices}
  A:=\exp(a) \quad \text{and} \quad S:= A\left(D+\eps_\Sigma -\frac{a}{\alpha_\Sigma}- Y_\Sigma-\sL_\Sigma\right).
\end{equation}
$A$ and $S$ satisfy the dynamics conjectured in \eqref{def:A-dynamics} and \eqref{def:S-dynamics} with $\mu_A$, $\sigma_A$, $\mu_S$, $\sigma_S$, and $\kappa$ defined by
\begin{align}
\begin{split}\label{eqn:terms}
  \kappa &:= \alpha_\Sigma(\sigma_D + \sigma_\eps - \sigma_\sL),\\
  \sigma_A &:= Z_a,\\
  \mu_A %&:= \rho_\Sigma + \alpha_\Sigma (\mu_D+\mu_\eps-\mu_\sL)-\frac{\alpha_\Sigma^2}{2}(\sigma_D+\sigma_\eps-\sigma_\sL)^2+\alpha_\Sigma \sigma_A(\sigma_D+\sigma_\eps-\sigma_\sL)\\
      &:= \rho_\Sigma + \alpha_\Sigma (\mu_D+\mu_\eps-\mu_\sL)-\frac{\kappa^2}{2}+\kappa \sigma_A,\\
  \sigma_S &:= \frac{A}{\alpha_\Sigma}\left(\kappa- \sigma_A -\alpha_\Sigma Z_\Sigma\right),\\
  \mu_S &:= \kappa \sigma_S.%\alpha_\Sigma \sigma_S\left(\sigma_D+\sigma_\eps-\sigma_\sL\right).
  \end{split}
\end{align}
The process $\kappa$ can be interpreted as the market price of risk and is discussed below in Section~\ref{section:stock-market-effects}.

Theorem~\ref{thm:main} is the main result of this work and is the implementation of Theorem~\ref{theorem:main-1}.  The proof of Theorem~\ref{thm:main} is contained in Section~\ref{section:proofs}.
\begin{theorem}\label{thm:main}
  Assume that the wage rates are nonzero for all $t\in[0,T]$, $\bP$-a.s., and that for the dividend and wage rate dynamics coefficients, we have $\mu_D, \sigma_D, \mu_{w_i}, \sigma_{w_i} \in\sS^\infty$ for $i=1,\ldots, I$. Let $\lambda_{keep}\in[0,1]$, $\lambda_{ubi}\in[0,1-\lambda_{keep}]$, and $\delta\in\R$ be given.  Let $\big((a,Z_a), (Y_i, Z_i)_{1\leq i\leq I}\big)$ be the unique solution to the BSDE system \eqref{def:bsde} guaranteed by Proposition~\ref{prop:bsde}.  There exists an equilibrium with annuity and stock price processes given by \eqref{def:prices} and labor response functions given by Definition~\ref{def:response-fn} using the parameter $\delta\in\R$.  For $\kappa$, $\sigma_A$, $\mu_A$, $\sigma_S$, and $\mu_S$ given in \eqref{eqn:terms}, $A$ and $S$ satisfy the dynamics conjectured in \eqref{def:A-dynamics} and \eqref{def:S-dynamics}.  The agents' optimal labor proportions are the processes $L_1, \ldots, L_I$ defined in Definition~\ref{def:Li}. For $i=1,\ldots, I$, agent $i$'s optimal number of stock shares $\pi_i$ is
  \begin{equation}\label{def:pii}
    \pi_i = 
    \begin{cases}
      \frac{A}{\alpha_i\sigma_S}\left(\kappa - \sigma_A -\alpha_i Z_i\right), & \text{if }\sigma_S\neq 0\\
      \pi_{i,0-}, & \text{otherwise}
    \end{cases}.
  \end{equation}
  Agent $i$'s optimal wealth process is given by $X_{i,0-} =X_{i,0}= \theta_{i,0-}A_0 + \pi_{i,0-}S_0$ and for $t\in[0,T]$ by
  \begin{align}\label{def:Xi}
  \begin{split}
    X_{i,t} &= A_t\left(\tfrac{X_{i,0-}}{A_0} + \int_0^t \left(\tfrac{\pi_{i,s}}{A_s}(\mu_{S,s}-\sigma_{A,s}\sigma_{S,s}) - \tfrac{a_s+\alpha_i(Y_{i,s} - \eps_{i,s}(L_{i,s})+\sL_{i,s})}{\alpha_i A_s}\right)ds  + \int_0^t \tfrac{\pi_{i,s}\sigma_{S,s}}{A_s} dB_s\right).
  \end{split}
  \end{align}
%  for $t\in[0,T]$
%  \begin{equation}\label{def:Xi}
%    X_{i,t} := A_t \left(\frac{\theta_{i,0-}A_0+\pi_{i,0-}S_0}{A_0} + \int_0^t \frac{\pi_{i,u}}{A_u}\left(\mu_{S,u}-\sigma_{A,u}^T\sigma_{S,u}+\frac{S_u}{A_u}-D_u\right)du + \int_0^t \frac{\pi_{i,u} \sigma_{S,u}}{A_u}dB_u\right).
%  \end{equation}
  Agent $i$'s optimal consumption rate is
  \begin{equation}\label{def:ci}
    c_i = \frac{X_i}{A} + \frac{a}{\alpha_i}+Y_i + \sL_i,
  \end{equation}
  and the optimal annuity holdings are determined by $\theta_i = \frac{X_i-\pi_i S}{A}$.  Agent $i$'s quadruple of strategies is admissible with $(\pi_i, \theta_i, c_i,L_i)\in\sA_i$ and solves \eqref{def:optimization}.
\end{theorem}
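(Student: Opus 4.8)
The plan is to verify directly that the listed objects satisfy the four conditions of Definition~\ref{def:eq}, treating optimality by a verification (supermartingale) argument built on the process $V$ of Definition~\ref{def:adm}(\ref{adm:int}).

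\textbf{Structural conditions and reduction of market clearing.} Consistency of optimizers is immediate: inserting $l=L_{i,t}$ into the affine map of Definition~\ref{def:response-fn} gives $\Lambda^i_{j,t}(L_{i,t})=L_{j,t}$. Alignment of on-equilibrium perceptions follows by setting $l=L_{i,t}$ in \eqref{epsi} and summing over $i$, using $\sum_i\sum_{j\neq i}w_{j,t}L_{j,t}=(I-1)\sum_j w_{j,t}L_{j,t}$: the $\delta$-contributions cancel and one is left with $(\lambda_{keep}+\lambda_{ubi})\sum_i w_{i,t}L_{i,t}$. For market clearing, $\sum_i\pi_i=1$ is a direct computation from \eqref{def:pii}, \eqref{eqn:terms}, using $\sum_i\alpha_i^{-1}=\alpha_\Sigma^{-1}$ and $\sigma_S=\frac{A}{\alpha_\Sigma}(\kappa-\sigma_A-\alpha_\Sigma Z_\Sigma)$. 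Given this, $\sum_i\theta_i=(\sum_iX_i-S)/A$ and $\sum_ic_i=(\sum_iX_i)/A+a/\alpha_\Sigma+Y_\Sigma+\sL_\Sigma$, so both remaining clearing identities are equivalent to $\sum_iX_i=A+S$ with $A,S$ as in \eqref{def:prices}.

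\textbf{The aggregate wealth identity and admissibility.} First one checks that, for each $i$, the process $X_i$ of \eqref{def:Xi} together with $\theta_i=(X_i-\pi_iS)/A$ and $c_i$ of \eqref{def:ci} solves the self-financing equation \eqref{eqn:self-financing} with labor $L_i$; this is Itô's formula for $X_i=A\cdot(X_i/A)$ via \eqref{def:A-dynamics}--\eqref{def:S-dynamics}, and it also yields part~\ref{adm:dyn} of admissibility. Summing \eqref{eqn:self-financing} over $i$ and inserting $\sum_i\pi_i=1$, the formulas for $\theta_i,c_i$, and the definition of $S$, one finds that $P:=\sum_iX_i$ and $A+S$ solve the \emph{same} linear equation $dP=P(\mu_A-A^{-1})\,dt+P\sigma_A\,dB+(\text{common source term})$ with the same initial value $\sum_iX_{i,0-}=A_0+S_0$; since $A=e^a$ is bounded above and below and $\sigma_A=Z_a\in\bmo$ (Proposition~\ref{prop:bsde}), the difference solves a homogeneous linear SDE started at $0$, hence vanishes, so $\sum_iX_i=A+S$. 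The remaining conditions in \ref{adm:1} follow from Propositions~\ref{prop:Li} and~\ref{prop:bsde}: $a\in\sS^\infty$, $Z_a,Z_i\in\bmo$, $Y_i-\eps_i(L_i)+\sL_i\in\sS^\infty$, the labor/income coefficients and $\kappa,\mu_A$ are bounded, and the potentially singular terms appear only in the combinations $\pi_i\sigma_S=\frac{A}{\alpha_i}(\kappa-\sigma_A-\alpha_iZ_i)$ and $\pi_i\mu_S=\kappa\,\pi_i\sigma_S$, which lie in $\bmo$. Part~\ref{adm:int} for the candidate then follows once the drift of $V$ is shown to vanish, as $V$ is then $V_0$ plus a stochastic integral controlled by these $\bmo$ bounds.

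\textbf{Optimality.} Fix $i$ and any admissible $(\pi,\theta,c)\in\sA_i(L)$ with wealth $X$, and set $N:=X/A+Y_i$. Applying Itô's formula to $V_t=-e^{-\rho_it-\alpha_iN_t}-\int_0^t e^{-\rho_is-\alpha_ic_s}u_i(L_s)\,ds$, and substituting $d(X/A)$ (derived from \eqref{eqn:self-financing} with $\theta=(X-\pi S)/A$) and the drift of $Y_i$ from \eqref{def:bsde}, gives $dV=e^{-\rho_it-\alpha_iN}\,G\,dt+(\text{local martingale})$, where $G$ is the sum of $\rho_i$, the drift of $\alpha_iY_i$, the term $\alpha_iX/A^2$, a term $\alpha_i\tfrac{\pi(\mu_S-\sigma_A\sigma_S)}{A}-\tfrac{\alpha_i^2}{2}(\tfrac{\pi\sigma_S}{A}+Z_i)^2$ depending only on $\pi$, and a term $\tfrac{\alpha_i(\eps_i(L)-c)}{A}-e^{\alpha_i(X/A+Y_i-c)}u_i(L)$ depending only on $(c,L)$. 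Using $\mu_S=\kappa\sigma_S$, the $\pi$-term is concave with unique maximizer $\pi=\pi_i$ of \eqref{def:pii} and maximal value $\tfrac12(\kappa-\sigma_A)^2-\alpha_iZ_i(\kappa-\sigma_A)$; the $(c,L)$-term is jointly concave (the exponent is affine in $c$ and, since $\beta_i,\gamma_i<0$, convex in $l$), and its first-order conditions, with $u_i(L_i)=e^{\alpha_i\sL_i}$ and $\partial_l\eps_{i,t}(l)=w_{i,t}(\lambda_{keep}+\tfrac{\lambda_{ubi}}{I}(1+\delta(I-1)))$, reproduce exactly $c=c_i$ of \eqref{def:ci} and $l=L_i$ of Definition~\ref{def:Li}. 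Substituting these maxima, the $X$-dependence cancels and what remains is precisely the $dY_i$ drift equation of \eqref{def:bsde}; hence $\sup_{\pi,c,L}G=0$, attained along the candidate. (Where $\sigma_S=0$ one has $\mu_S=0$, the $\pi$-term is the $\pi$-independent constant $-\tfrac{\alpha_i^2}{2}Z_i^2$, and the same algebra gives $\sup_{c,L}G\le 0$, attained by $c_i,L_i$.) Thus $V$ is a supermartingale for every admissible strategy (its local-martingale part is a martingale by \ref{adm:int}) and a martingale for the candidate. At $t=T$, $A_T=1$ and $Y_{i,T}=\eps_{i,T}(L_{i,T})-\sL_{i,T}$, so $-e^{-\rho_iT-\alpha_iN_T}=-e^{-\rho_iT-\alpha_i(X_T+\eps_{i,T}(L_{i,T}))}u_i(L_{i,T})$; since $l\mapsto-\alpha_i\eps_{i,T}(l)+\log u_i(l)$ is strictly convex and minimized at $l=L_{i,T}$ (the relation of Definition~\ref{def:Li}), this dominates $U_i(T,X_T+\eps_{i,T}(L_T),L_T)$, with equality when $L_T=L_{i,T}$. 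Hence the objective in \eqref{def:optimization} is pathwise $\le V_T$, so $\E[\text{objective}]\le\E[V_T]\le V_0$ for every admissible strategy, while the candidate attains $\E[\text{objective}]=\E[V_T]=V_0=-e^{-\alpha_i(X_{i,0-}/A_0+Y_{i,0})}$, which is deterministic. This shows the candidate solves \eqref{def:optimization} and, with the structural and clearing checks, completes the verification. The substantive work is the $dV$-computation together with the $\bmo$/boundedness bookkeeping from Propositions~\ref{prop:Li}--\ref{prop:bsde} that legitimizes every coefficient and integral, and the separate care required on the set where $\sigma_S$ vanishes; the maximization of $G$ itself is elementary once the algebra of \eqref{def:bsde} and \eqref{eqn:terms} is organized.
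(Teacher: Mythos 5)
Your proposal is correct and follows essentially the same route as the paper: a supermartingale verification on $V$ whose drift splits into a consumption/labor part and a quadratic-in-$\pi$ part maximized at the candidate (the paper phrases the consumption step via Fenchel's inequality, which is equivalent to your first-order condition), together with reducing annuity and goods clearing to the identity $\sum_i X_i = A+S$ proved by matching SDEs and initial conditions. The only place you are slightly looser than the paper is the martingale property in Definition~\ref{def:adm}(\ref{adm:int}) for the candidate, where the paper invokes Kazamaki's Theorem 3.2 on the stochastic exponential $W_i e^{\int_0^\cdot A_s^{-1}ds}$ of a BMO integrand; your "controlled by bmo bounds" remark should be made precise in exactly that way.
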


\begin{remark}[Extension of the Labor/Leisure Utility Function, $u_i$]\label{rmk:ui} For each $i=1,\ldots, I$, we have assumed that the labor/leisure utility function is given by $u_i(l) = l^{\beta_i}(1-l)^{\gamma_i}$, $l\in(0,1)$, for $\beta_i, \gamma_i<0$ in \eqref{def:optimization}.  This form of $u_i$ has the nonstandard feature that more leisure is not always better.  Under the additional assumptions that $\lambda:=\lambda_{keep}+\tfrac{\lambda_{ubi}}{I}(1+\delta(I-1))>0$ and $w_{i,t} > \tfrac{-\gamma_i}{\alpha_i\lambda}$ for each $i$ and $t\in[0,T]$, the conclusions of Theorem \ref{thm:main}, Proposition \ref{prop:Li}, and Proposition \ref{prop:bsde} hold for
$$
  u_i(l) = (1-l)^{\gamma_i}, \quad l\in(0,1),
$$
where $\gamma_i<0$ (so that $-u_i$ is strictly concave and strictly increasing in $(1-l)$).

This extension relies on two key steps.  First, the first-order condition for the labor problem must be satisfied, yielding optimal strategies $L_i$, as in Definition \ref{def:Li}.  Second, the drift and volatility dynamics terms in $L_i$, $\sL_i$, and $w_iL_i$ must remain uniformly bounded.  Under the stated positivity and lower-bound assumptions, the proofs of these two key steps are nearly identical to the current setting.%$\tfrac{u_i'}{u_i}:(0,1)\rightarrow (-\gamma_i, \infty)$
\end{remark}

\section{Universal Basic Income Effects}\label{section:effects}
We study the impact of a UBI policy on the economy in terms of labor market participation, stock market effects, and welfare. {\it Throughout this section, we assume that all wage rate processes $w_i = (w_{i,t})_{t\in[0,T]}$, $i=1,\ldots,I$ are positive.}  Although the construction and existence of a universal basic income equilibrium does not rely on positivity of wage rates, drawing sensible and applicable conclusions about such equilibria does require wage rates to be positive.

\subsection{Labor Market Participation}\label{section:labor-mkt}

By Definition \ref{def:Li}, for each $i=1,\ldots, I$, $L_i$ is $(0,1)$-valued and determined by
$$
  \Psi_i(L_i) = \alpha_i w_i \left(\lambda_{keep}+\frac{\lambda_{ubi}}{I}\big(1+\delta(I-1)\big)\right),
$$
where $\Psi_i(l) := \frac{u_i'(l)}{u_i(l)}$ for $l\in(0,1)$.  We denote $\lambda:=\left(\lambda_{keep}+\frac{\lambda_{ubi}}{I}\big(1+\delta(I-1)\big)\right)$.  By It\^o's Lemma, we determine the dynamics
\begin{align*}
  dL_i &= \left(\frac{\alpha_i\lambda \mu_{w_i}}{\Psi_i'(L_i)} - \frac{\alpha_i^2\lambda^2 \sigma_{w_i}^2}{2}\cdot \frac{\Psi_i''(L_i)}{\left(\Psi_i'(L_i)\right)^3}\right)dt + \frac{\alpha_i\lambda \sigma_{w_i}}{\Psi_i'(L_i)} dB,\\
  d\sL_i &= \tfrac{1}{\alpha_i}d \left(\log u_i(L_i)\right)\\
  &=\left(\lambda \mu_{w_i} \cdot\frac{\Psi_i(L_i)}{\Psi_i'(L_i)} - \frac{\alpha_i \lambda^2 \sigma_{w_i}^2}{2} \cdot \frac{\Psi_i''(L_i)\Psi_i(L_i)}{(\Psi_i'(L_i))^3}+\frac{\alpha_i\lambda^2\sigma_{w_i}^2}{2\Psi'_i(L_i)}\right) dt + \frac{\lambda\sigma_{w_i}\Psi_i(L_i)}{\Psi'_i(L_i)}dB,\\
  d\big(w_i L_i) &= \left(\mu_{w_i}L_i +\frac{\alpha_i\lambda\sigma_{w_i}^2+\mu_{w_i}\Psi_i(L_i)}{\Psi'_i(L_i)} - \frac{\alpha_i \lambda \sigma_{w_i}^2}{2}\cdot \frac{\Psi_i(L_i)\Psi''_i(L_i)}{(\Psi_i'(L_i))^3} \right)dt \\
  & \quad \quad + \left(\sigma_{w_i}L_i + \frac{\sigma_{w_i}\Psi_i(L_i)}{\Psi'_i(L_i)}\right) dB, % + \sigma_{w_i}L_i
\end{align*}
which by the consistency requirement of Definition \ref{def:eq}, leads to
\begin{align*}
  d\eps_\Sigma &= (\lambda_{keep}+\lambda_{ubi})\sum_{i=1}^I d\big(w_iL_i\big).\end{align*}
By the invertibility of $\Psi_i:(0,1)\rightarrow\R$, we can express $L_i = \Psi_i^{-1}(\alpha_i \lambda w_i)$, allowing us to express all of the terms and dynamics coefficients above as functions of $\lambda$ or wage rates.

For each $i=1,\ldots, I$, since $\Psi_i^{-1}$ is strictly increasing and $w_i$ is positive, $L_i$ is strictly increasing in the parameter $\lambda =\left(\lambda_{keep}+\frac{\lambda_{ubi}}{I}\big(1+\delta(I-1)\big)\right)$ for $\lambda>0$.  Thus, $L_i$, $\sL_i$, $w_iL_i$, $\eps_i(L_i)$, and $\eps_\Sigma$ are all strictly increasing in $\lambda$ for $\lambda>0$.  When $\delta>-\frac{1}{I-1}\left(1+ \tfrac{\lambda_{keep}\cdot I}{\lambda_{ubi}}\right)$, $\lambda$ is positive.

For these comparative statics, fix $\lambda_{total}:=\lambda_{keep}+\lambda_{ubi}$ and vary $\lambda_{ubi}$, so that $\lambda_{keep}=\lambda_{total}-\lambda_{ubi}$ and $\lambda = \lambda_{total}+\frac{I-1}{I}(\delta-1)\lambda_{ubi}$.  Since $L_i=\Psi_i^{-1}(\alpha_i\lambda w_i)$ is increasing in $\lambda$, increasing $\lambda_{ubi}$ raises labor participation and aggregate income when $\delta>1$, lowers them when $\delta<1$, and has no effect when $\delta=1$, provided that $\lambda>0$.

\subsection{Stock Market Effects}\label{section:stock-market-effects}
% kappa, r definitions
How does the equilibrium financial market respond to changes in universal basic income policy?  We investigate these changes through the market-price-of-risk and interest rate, which we define via a state price deflator.  A state price deflator is a strictly positive It\^o process $\xi$ with dynamics
$$
  d\xi = -\xi \big(\kappa \,dB + r\, dt\big), \quad \xi_0>0,
$$
such that $\left(\xi A + \int_0^\cdot \xi_s ds\right)$ and $\left(\xi S + \int_0^\cdot \xi_s D_s ds\right)$ are local martingales.  Given a state price deflator $\xi$ for the equilibrium financial market, $\xi$'s dynamics are constrained so that its market-price-of-risk $\kappa$ and interest rate $r$ satisfy
$$
  \mu_S = \kappa \sigma _S \quad \text{and} \quad r = \mu_A - \kappa\sigma_A.
$$
In our setting, we have
\begin{align*}
  \kappa &= \alpha_\Sigma( \sigma_D + \sigma_\eps - \sigma_{\sL}),\\
  r &= \rho_\Sigma + \alpha_\Sigma( \mu_D + \mu_\eps - \mu_{\sL}) - \frac12 \kappa^2.
\end{align*}

We call $r$ an interest rate, even though the financial market does not necessarily have an interest rate since it is unclear whether $\sigma_S$ is nonzero for all time with probability one.  However, should there be an interest rate in the usual sense, it would be our $r$.

As seen above in Section \ref{section:labor-mkt}, for each $i=1,\ldots, I$, the terms $L_i$, $\sL_i$, and $w_iL_i$ are strictly increasing in $\lambda$ and $w_i$ since $\Psi_i^{-1}$ is strictly increasing and $w_i$ is strictly positive, but their dynamics coefficients are not.  Even though we have closed-form expressions for all dynamics coefficients in terms of input parameters, these terms are nonmonotone in the input parameters.  The market price of risk and interest rate depend on those dynamics coefficients, and thus, they depend in a complex, nonmonotone manner on the input parameters:  $\lambda_{keep}$, $\lambda_{ubi}$, and $\delta$.

\begin{figure}[t]
	\begin{center}
	$
			\begin{array}{ccc}
			\includegraphics[width=0.40\textwidth]{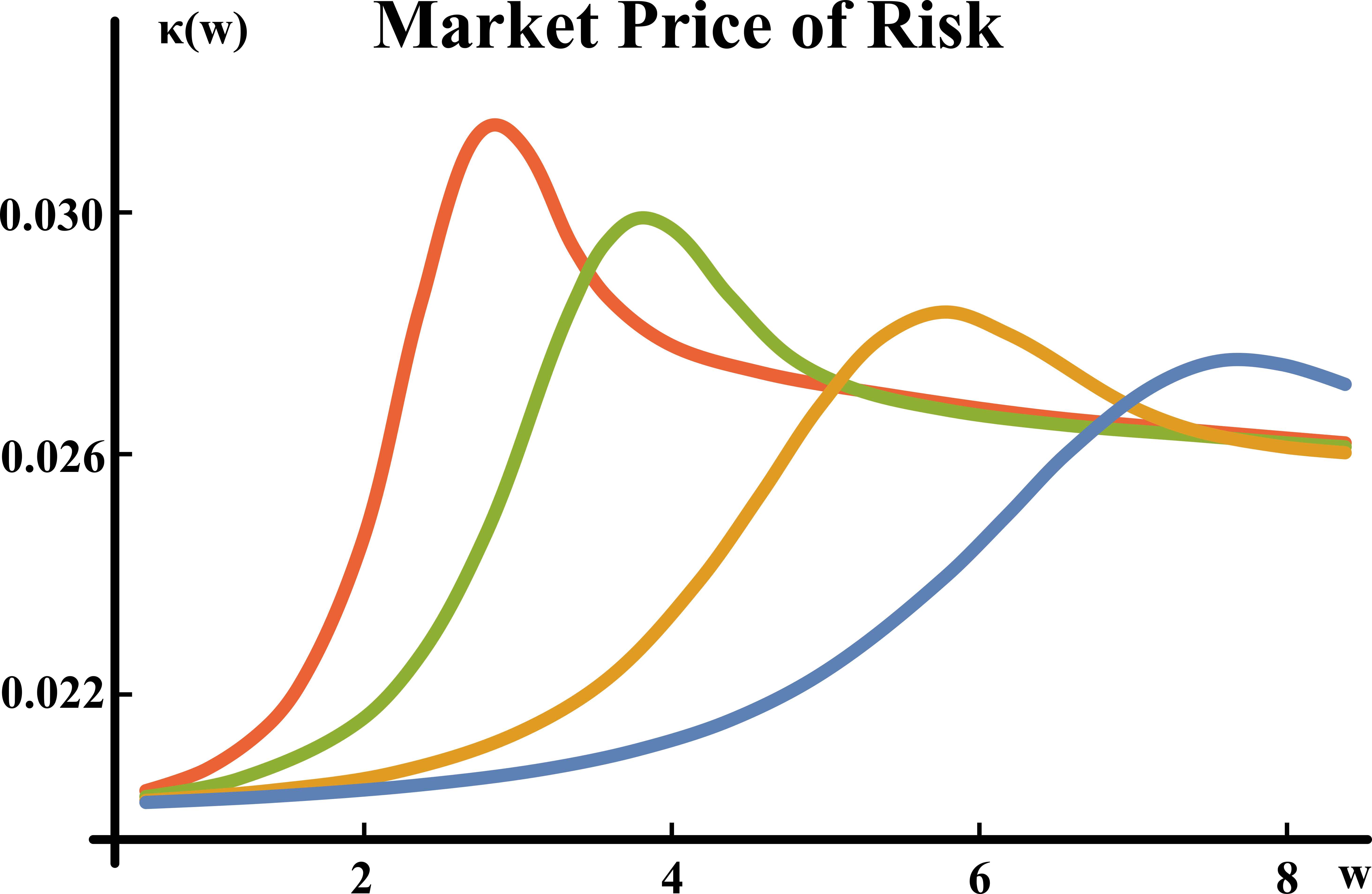} &
			\includegraphics[width=0.40\textwidth]{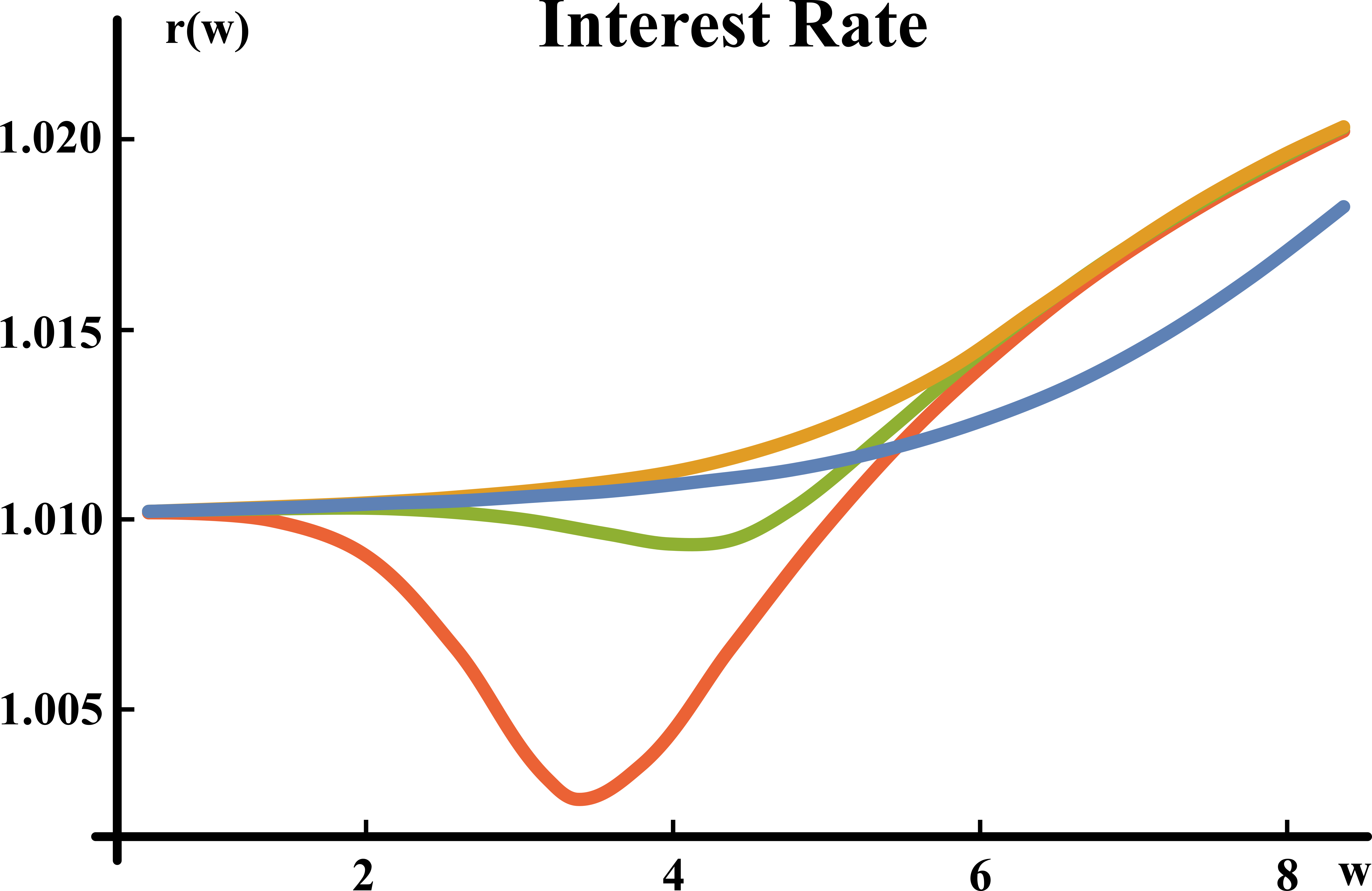}  &
			\includegraphics[width=0.15\textwidth]{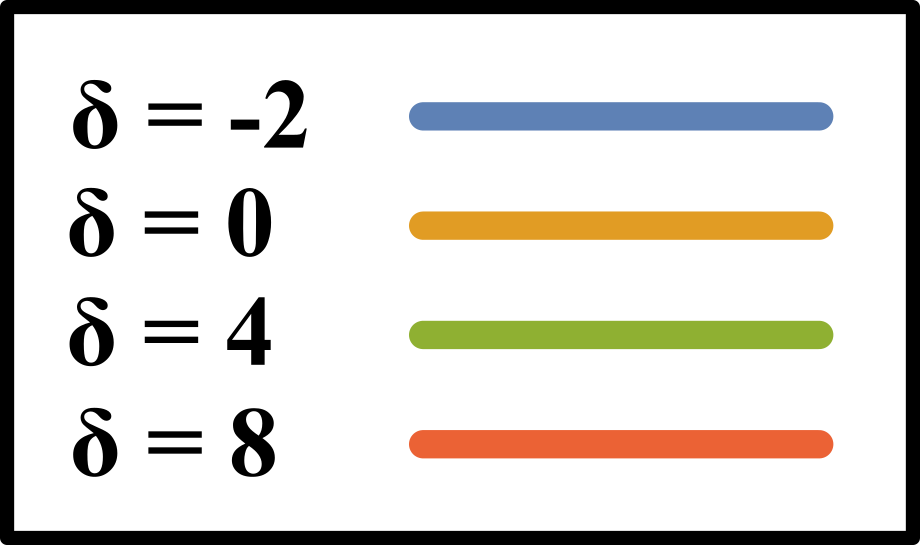} \\
			%\bigskip
			%I=10, \, \sigma_D=1,\, \kappa=5 & I=10, \sigma_{Y'}=10, \, \kappa=5 & \\ 
			\end{array}$
	\end{center}
	\caption{Graphs of the market price of risk and interest rate as functions of the wage rate $w$ for varying $\delta$.  The common set of parameters is $\lambda_{keep}=0.7$, $\lambda_{ubi}=0.2$, $I=2$, $\alpha_1=\alpha_2 = 0.2$, $\rho_1=\rho_2=1$, $\beta_1=-0.02$, $\beta_2=-0.01$, $\gamma_1 = -0.8$, $\gamma_2 = -1$, $\mu_{w_1} = 0.1$, $\mu_{w_2}=0.2$, $\sigma_{w_1} = 0.1$, $\sigma_{w_2}=-0.05$, $\mu_D = 0.1$, and $\sigma_D = 0.2$.  }% (\textcolor{myblue}{\bf ---}), $10$ (\textcolor{myorange}{-\,-\,-}), $20$ (\textcolor{mygreen}{- $\cdot$ -}).  }
	\label{fig:kappa-r}
\end{figure}

% fig:kappa-r:  \kappa and r plots for p = 1,3,6
% fig:kappa1:  \kappa plot for p=1
Figure~\ref{fig:kappa-r} plots the market price of risk and interest rate as functions of the wage rate across a common set of parameters $\lambda_{keep}=0.7$, $\lambda_{ubi}=0.2$, $I=2$, $\alpha_1=\alpha_2 = 0.2$, $\rho_1=\rho_2=1$, $\beta_1=-0.02$, $\beta_2=-0.01$, $\gamma_1 = -0.8$, $\gamma_2 = -1$, $\mu_{w_1} = 0.1$, $\mu_{w_2}=0.2$, $\sigma_{w_1} = 0.1$, $\sigma_{w_2}=-0.05$, $\mu_D = 0.1$, and $\sigma_D = 0.2$.  The initial wage rate is taken to be the same for both agents in the plots, and the graphs plot the market price of risk and interest rate as functions of the shared initial wage rate.  The $\beta_1$, $\beta_2$, $\gamma_1$, and $\gamma_2$ values are chosen so that $|\beta_1|\ll|\gamma_1|$ and $|\beta_2|\ll |\gamma_2|$ in order to approximate a labor utility function that is strictly decreasing in the labor proportion (and hence strictly increasing in leisure), as in Remark~\ref{rmk:ui}.  The other parameter values are chosen to illustrate the nontrivial behavior of the interest rate $r$ and market price of risk $\kappa$, where $r$ and $\kappa$ are observed to be nonmonotone in the initial wage rate. Nontrivial behavior is also observed with respect to the influence parameter, $\delta$.  For example, as $\delta$ increases in Figure~\ref{fig:kappa-r}, both the market price of risk and interest rate exhibit nonmonotone behavior with respect to $\delta$.

%When shown graphed together in Figure~\ref{fig:kappa-r}, the market price of risk and interest rate appear to be monotone in the wage rate.  However, closer inspection shows nonmonotonic behavior.  Figure~\ref{fig:r0-5} shows graphs of the interest rate with $\delta\in\{-5, 0\}$, which shows nonmonotone behavior.  For $\delta=-5$, even though $\delta$ is negative, the term $\lambda = \left(\lambda_{keep} + \frac{\lambda_{ubi}}{I}(1+\delta(I-1))\right)$ is still positive.
%
%
%% fig:r6:  r plot for p=6 (\delta=1), plotting w\mapsto r(w).
%\begin{figure}[t]
%	\begin{center}
%	$
%			\begin{array}{cc}
%			\includegraphics[width=0.40\textwidth]{r-neg-delta} &
%			\includegraphics[width=0.40\textwidth]{r-0-delta}   \\
%			\bigskip
%			\delta = -5 & \delta=0 \\
%			\end{array}$
%	\end{center}
%	\caption{Graphs of nonmonotone behavior in the interest rate as a function of the instantaneous wage rate $w$ for $\delta\in\{-5, 0\}$.  The common set of parameters are $\lambda_{keep}=0.7$, $\lambda_{ubi}=0.2$, $I=2$, $\alpha_1=\alpha_2 = 2$, $\rho_1=\rho_2=1$, $\beta_1=-0.2$, $\beta_2=-0.4$, $\mu_{w_1} = 0.1$, $\mu_{w_2}=0.2$, $\sigma_{w_1} = 0.1$, $\sigma_{w_2}=-0.05$, and $\mu_D = \sigma_D = 0.1$.  }
%	\label{fig:r0-5}
%\end{figure}

\subsection{Welfare}
We study the welfare of the economy using aggregate certainty equivalents in the simple case of constant wage rates and a dividend stream with a drift and volatility that are deterministic.  Assume in this subsection that $w_1, \ldots, w_I$ are positive constants.  Since wage rates are constant, the optimal labor proportions $L_1, \ldots, L_I$ and the corresponding income streams are also constant.  We also assume in this section that $\mu_D = \mu_D(t)$ and $\sigma_D=\sigma_D(t)$ are measurable functions of time with $\sigma_D(t)\neq 0$ for all $t\in[0,T]$.

For $i=1,\ldots, I$, the certainty equivalent for agent $i$ is defined as the constant value $CE_i$ such that
$$
  \int_0^T U_i(t,CE_i,L_i) dt + U_i\big(T,CE_i, L_i\big) = \E\left[\int_0^T U_i(t,c_{i,t},L_i) dt + U_i\big(T,X_{i,T}+\eps_{i}(L_i), L_i\big)\right],
$$
where $c_i$, $X_i$, and $L_i$ are optimal for agent $i$ in \eqref{def:optimization}.  The time dependence on $\eps_i$ and $L_i$ is dropped due to constant wage rates.

The certainty equivalent represents the constant consumption stream level that an agent is willing to exchange for her optimal (stochastic) consumption stream.

Proposition \ref{prop:welfare} provides a representation for the aggregate certainty equivalent in the constant wage rate case.  The aggregate certainty equivalent measures the welfare in the economy.  The proof of Proposition~\ref{prop:welfare} is in Section~\ref{section:proofs}.

\begin{proposition}\label{prop:welfare}
Assume that the wage rates are positive constants and that
$\mu_D,\sigma_D:[0,T]\to\mathbb R$ are deterministic bounded
measurable functions, with $\sigma_D(t)\neq0$ for every
$t\in[0,T]$. Define
\[
    K_i(T):=\int_0^T e^{-\rho_i t}\,dt+e^{-\rho_iT}.
\]
Then agent $i$'s certainty equivalent is
\[
    CE_i
    =
    \frac{1}{\alpha_i}\log u_i(L_i)
    +
    \frac{1}{\alpha_i}\log K_i(T)
    +
    \frac{X_{i,0}}{A_0}
    +
    Y_{i,0}.
\]
Consequently,
\[
    \sum_{i=1}^I CE_i
    =
    1+D_0+\eps_{\Sigma,0}
    -\frac{a_0}{\alpha_\Sigma}
    +
    \sum_{i=1}^I
    \frac{1}{\alpha_i}\log K_i(T).
\]
\end{proposition}

Within the welfare representation, due to the constant wage rates, the universal basic income parameters ($\lambda_{keep}$, $\lambda_{ubi}$, and $\delta$) only appear in $\eps_{\Sigma,0}$, which as in Section~\ref{section:labor-mkt}, is increasing in $\lambda = \left(\lambda_{keep} + \frac{\lambda_{ubi}}{I}\left(1+\delta(I-1)\right)\right)$, so long as $\lambda>0$.

Holding $\lambda_{total}=\lambda_{keep}+\lambda_{ubi}$ fixed as above, we have
  $$
    \eps_{\Sigma,0} = \lambda_{total}\sum_{i=1}^I w_iL_i,
  $$
so the policy parameters affect aggregate welfare through the labor choices $L_i$. Consequently, if $\lambda>0$, then increasing $\lambda_{ubi}$ raises welfare when $\delta>1$, lowers welfare when $\delta<1$, and leaves welfare unchanged when $\delta=1$.

\subsection{Model Extremes}\label{section:extremes}
We compare two extreme scenarios in our model:  pure capitalism versus a social market economy.  Under pure capitalism, agents keep all of their individual earnings ($\lambda_{keep}=1$), no earnings are redistributed ($\lambda_{ubi} = 0$), and the influence parameter $\delta$ is irrelevant.    A social market economy assumes all individual earnings are redistributed ($\lambda_{keep}=0$, $\lambda_{ubi}=1$).  In a social market economy, even though agents give up all of their individual earnings to the economy at large, they perceive this usurping of their income differently in their labor choice problem, \eqref{epsi} and \eqref{def:optimization}, depending on the influence parameter $\delta$.  Recall that throughout Section~\ref{section:effects}, wage rates are assumed to be positive.

In pure capitalism, labor choices using Definition~\ref{def:Li} are determined by
$$
  \frac{u_i'(L_i)}{u_i(L_i)} = \alpha_i w_i,
$$
whereas in a social market economy, labor choices are determined by
$$
  \frac{u_i'(L_i)}{u_i(L_i)} = \alpha_i w_i \left(\frac{1}{I} + \delta\cdot\frac{I-1}{I}\right).
$$
In a social market economy, labor market participation depends on social sentiment, measured by the influence parameter $\delta$.  In social market economies, labor force participation is greater (less) than that of pure capitalism when $\delta>1$ ($\delta<1$).

In both economies, the total income of the economy is $\eps_\Sigma  = (\lambda_{keep}+\lambda_{ubi}) \sum_{i=1}^I w_i L_i = \sum_{i=1}^I w_i L_i$ by Definition \ref{def:eq} item \ref{eq:perceptions}.  Each $L_i$ fluctuates depending on $\delta$ (social market economy only), $\lambda_{keep}$, and $\lambda_{ubi}$.  Since wages are positive, social market economies produce higher (lower) aggregate income than pure capitalism when $\delta>1$ ($\delta<1$).  Therefore, the social market economy only dominates pure capitalism in terms of labor market participation and aggregate income when social sentiment is large with $\delta>1$.

\section{Proofs}\label{section:proofs}

\begin{proof}[Proof of Proposition \ref{prop:Li}]
Let $i\in \{1,\ldots, I\}$ be given.  For $l\in(0,1)$, we define $\Psi_i(l) := \frac{u'_i(l)}{u_i(l)}$ and denote
$$
  \lambda = \lambda_{keep} + \frac{\lambda_{ubi}}{I}(1+\delta(I-1)).
$$
Then, the process $L_i$ satisfies $\Psi_i(L_i) = \alpha_i\lambda w_i$ and has dynamics $dL_i = \mu_{L_i}dt + \sigma_{L_i}dB$ with
\begin{equation}\label{Li-dynamics}
  \mu_{L_i} = \frac{\alpha_i\lambda \mu_{w_i}}{\Psi_i'(L_i)} - \frac{\alpha_i^2\lambda^2 \sigma_{w_i}^2}{2}\cdot \frac{\Psi_i''(L_i)}{\left(\Psi_i'(L_i)\right)^3}
  \quad \text{ and } \quad 
  \sigma_{L_i} = \frac{\alpha_i\lambda \sigma_{w_i}}{\Psi_i'(L_i)}.
\end{equation}
The functions $l\mapsto \frac{1}{\Psi_i'(l)}$ and $l\mapsto \frac{\Psi_i''(l)}{(\Psi_i'(l))^3}$ are uniformly bounded for $l\in(0,1)$ since they are continuous on $(0,1)$ and
\begin{align*}
  \lim_{l\rightarrow 0+} \frac{1}{\Psi_i'(l)} &= \lim_{l\rightarrow 1-} \frac{1}{\Psi_i'(l)} = 0,\\
  \lim_{l\rightarrow 0+} \frac{\Psi''_i(l)}{(\Psi_i'(l))^3} &= \lim_{l\rightarrow 1-} \frac{\Psi''_i(l)}{(\Psi_i'(l))^3} = 0.
\end{align*}
Since $\mu_{w_i}$ and $\sigma_{w_i}$ are uniformly bounded, we conclude that $\mu_{L_i}$ and $\sigma_{L_i}$ are also uniformly bounded.

Next, we verify the uniform boundedness of $\mu_{\sL_i}$ and $\sigma_{\sL_i}$, where we recall that $\sL_i = \frac{1}{\alpha_i} \log u_i(L_i)$.  We have
\begin{align*}
  d \log u_i (L_i) &= \left(\alpha_i\lambda \mu_{w_i} \cdot\frac{\Psi_i(L_i)}{\Psi_i'(L_i)} - \frac{\alpha_i^2 \lambda^2 \sigma_{w_i}^2}{2} \cdot \frac{\Psi_i''(L_i)\Psi_i(L_i)}{(\Psi_i'(L_i))^3}+ \frac12\cdot\frac{\alpha_i^2\lambda^2\sigma_{w_i}^2}{\Psi'_i(L_i)}\right) dt\\
  & \quad\quad\quad\quad
  + \frac{\alpha_i\lambda\sigma_{w_i}\Psi_i(L_i)}{\Psi'_i(L_i)}dB.
\end{align*}
Continuity of $l\mapsto \frac{\Psi_i(l)}{\Psi'_i(l)}$ and $l\mapsto \frac{\Psi_i''(l)\Psi_i(l)}{(\Psi_i'(l))^3}$ for $l\in(0,1)$ and the limits
\begin{align*}
  \lim_{l\rightarrow 0+} \frac{\Psi_i(l)}{\Psi_i'(l)} &= \lim_{l\rightarrow 1-} \frac{\Psi_i(l)}{\Psi_i'(l)} = 0,\\
  \lim_{l\rightarrow 0+} \frac{\Psi''_i(l)\Psi_i(l)}{(\Psi_i'(l))^3} &= \lim_{l\rightarrow 1-} \frac{\Psi''_i(l)\Psi_i(l)}{(\Psi_i'(l))^3} = 0,
\end{align*}
imply that the functions $l\mapsto \frac{\Psi_i(l)}{\Psi'_i(l)}$ and $l\mapsto \frac{-\Psi_i''(l)\Psi_i(l)}{(\Psi_i'(l))^3}$ are uniformly bounded on $(0,1)$.  Together with the boundedness of $\mu_{w_i}$ and $\sigma_{w_i}$, we conclude that $\mu_{\sL_i}$ and $\sigma_{\sL_i}$ are uniformly bounded.

%The only remaining term to verify uniform boundedness of is $\frac{-\Psi_i''(L_i)}{(\Psi_i'(L_i))^2}$, which we have by noting the continuity of $l\mapsto \frac{-\Psi_i''(l)}{(\Psi_i'(l))^2}$ for $l\in(0,1)$ and the limits
%$$
%  \lim_{l\rightarrow 0+} \frac{\Psi''_i(l)}{(\Psi_i'(l))^2} = \lim_{l\rightarrow 1-} \frac{\Psi''_i(l)}{(\Psi_i'(l))^2} = 0.
%$$

By \eqref{epsi}, the uniform boundedness of $\mu_{\eps_i}$ and $\sigma_{\eps_i}$ relies on the boundedness of the drift and volatility of $w_i L_i$.  The product $w_i L_i$ is an It\^o process with drift coefficient $\big(w_i\mu_{L_i} + \mu_{w_i}L_i + \sigma_{w_i}\sigma_{L_i}\big)$ and volatility coefficient $\big(w_i\sigma_{L_i} + \sigma_{w_i}L_i\big)$.  By the measurability and boundedness of $\mu_{w_i}$, $\mu_{L_i}$, $\sigma_{w_i}$, $\sigma_{L_i}$, and $L_i$, we only need to verify the boundedness of $w_i\sigma_{L_i}$ and $w_i\mu_{L_i}$.

For $w_i\sigma_{L_i}$, we use Definition~\ref{def:Li} for $L_i$ and the dynamics \eqref{Li-dynamics} to obtain
$$
  w_i\sigma_{L_i} = \frac{\alpha_i\lambda \sigma_{w_i} w_i}{\Psi'_i(L_i)} = \frac{\sigma_{w_i}\Psi_i(L_i)}{\Psi'_i(L_i)},
$$
where the above boundedness of $l\mapsto \frac{\Psi_i(l)}{\Psi'_i(l)}$ for $l\in(0,1)$
implies uniform boundedness of $w_i\sigma_{L_i}$.

Similarly for $w_i\mu_{L_i}$, we use Definition~\ref{def:Li} and dynamics \eqref{Li-dynamics} to obtain
$$
  w_i\mu_{L_i} = \frac{\mu_{w_i}\Psi_i(L_i)}{\Psi'_i(L_i)} - \frac{\alpha_i \lambda \sigma_{w_i}^2}{2}\cdot \frac{\Psi_i(L_i)\Psi''_i(L_i)}{(\Psi_i'(L_i))^3},%= \frac{\sigma_{w_i}\Psi_i(L_i)}{\Psi'_i(L_i)},
$$
where the above boundedness of $l\mapsto \frac{\Psi_i(l)}{\Psi'_i(l)}$ and $l\mapsto\frac{\Psi_i(l)\Psi''_i(l)}{(\Psi_i'(l))^3}$ for $l\in(0,1)$ imply that $w_i\mu_{L_i}$ is uniformly bounded.

\end{proof}

\begin{proof}[Proof of Proposition \ref{prop:bsde}]
We consider the secondary BSDE system
\begin{align}
  \begin{split}\label{def:bsde2}
    da &= Z_a\, dB + \left(\rho_\Sigma +\alpha_\Sigma(\mu_D+\mu_\eps-\mu_\sL) -\frac{1}{2}\big(\alpha_\Sigma(\sigma_D+\sigma_\eps-\sigma_\sL)-Z_a\big)^2- \exp(-a)\right)dt,\\
    d\overline{Y}_i &= \overline{Z}_i\,dB + \frac{1}{\alpha_i}\left(-\rho_i + \alpha_i(\mu_{\sL_i}-\mu_{\eps_i}) + \frac{1+a+\alpha_i\overline{Y_i}}{\exp(a)}-\frac{1}{2}\big(\alpha_\Sigma(\sigma_D+\sigma_\eps-\sigma_\sL)-Z_a\big)^2\right.\\
    &\quad\quad\quad\quad\quad\quad\left.\phantom{\frac{(Y_i)}{A}}+\alpha_i\left(\overline{Z_i}+\sigma_{\eps_i} - \sigma_{\sL_i}\right)\big(\alpha_\Sigma(\sigma_D+\sigma_\eps-\sigma_\sL)-Z_a\big)\right)dt,\\
  a_T &= 0, \quad \overline{Y}_{i,T}= 0, \quad 1\leq i\leq I.
  \end{split}
\end{align}
We will show that there exists an $\sS^\infty\times\bmo$ solution to \eqref{def:bsde2}.  Then for each $i=1,\ldots, I$, taking $Y_i := \overline{Y}_i +\eps_i(L_i) -\sL_i$ and $Z_i := \overline{Z}_i + \sigma_{\eps_i} - \sigma_{\sL_i}$ gives us the solution $\big((a, Z_a), (Y_i, Z_i)_{1\leq i\leq I}\big)$ to the BSDE system \eqref{def:bsde}.  The uniform boundedness of $\sigma_{\eps_i}$ and $\sigma_{\sL_i}$ established in Proposition~\ref{prop:Li} ensures that $Z_i\in\bmo$ if and only if $\overline{Z}_i\in\bmo$.

Since the $(a, Z_a)$ equation decouples, we prove the existence of its solution first. For each $N\geq 1$, we consider a truncated equation,
\begin{align}\label{bsde-N} \tag{BSDE${ }_N$}
\begin{split} 
  da^{(N)} &= Z_a^{(N)}\, dB + \left(\rho_\Sigma +\alpha_\Sigma(\mu_D+\mu_\eps-\mu_\sL) -\frac{1}{2}\big(\alpha_\Sigma(\sigma_D+\sigma_\eps-\sigma_\sL)-Z_a^{(N)}\big)^2
  \right.\\
  &\left.\quad\quad\quad\quad\quad\quad  \phantom{\frac12}-\exp(-\max(a^{(N)},-N))\right)dt,\\
  a^{(N)}_T &= 0.
\end{split}\end{align}
%with $a^{(N)}_T=0$.
Proposition 2 of Tevzadze~\cite{T08SPA} implies that there exists a unique $\sS^\infty\times\bmo$ solution $\big(a^{(N)}, Z_a^{(N)}\big)$.  Since $\big(a^{(N)}-\int_0^\cdot \left(\rho_\Sigma  + \alpha_\Sigma\left(\mu_D + \mu_\eps - \mu_\sL\right)\right)\big)$ is a supermartingale, for each $t\in[0,T]$, we have
$$
  a^{(N)}_t \geq \E\left[a^{(N)}_T - \int_t^T\big( \rho_\Sigma+ \alpha_\Sigma\left(\mu_{D,s} + \mu_{\eps,s} - \mu_{\sL,s}\right)\big)ds\,\Big|\,\sF_t \right] \geq -C,
$$
where $C$ is a constant that is independent of $N$.  We derive $C$ by using bounds on $\mu_\eps$ and $\mu_\sL$ guaranteed by Proposition~\ref{prop:Li}.  For $N\geq C$ and $x\geq -C$, we have $-\exp(-\max(x, -N)) = -e^{-x}$, so the lower bound on $a^{(N)}$ guarantees that $\big(a^{(N)}, Z^{(N)}_a\big)$ solves ($\text{BSDE}_C$).  %\eqref{bsde-N}
Moreover, uniqueness of $\big(a^{(N)}, Z^{(N)}_a\big)$ as a solution to ($\text{BSDE}_C$) implies that $\big(a^{(N)}, Z^{(N)}_a\big) = \big(a^{(C)}, Z^{(C)}_a\big)$ for every $N\geq C$ and $\big(a^{(C)}, Z^{(C)}_a\big)$ solves
\begin{equation}\label{bsde:a}
  da = Z_a\, dB + \left(\rho_\Sigma +\alpha_\Sigma(\mu_D+\mu_\eps-\mu_\sL) -\frac{1}{2}\big(\alpha_\Sigma(\sigma_D+\sigma_\eps-\sigma_\sL)-Z_a\big)^2- \exp(-a)\right)dt,
\end{equation}
with $a_T = 0$.  Thus, $\big(a^{(C)}, Z^{(C)}_a\big)$ is the unique $\sS^\infty\times\bmo$ solution to \eqref{bsde:a} with $a_T=0$.

Let $(a,Z_a)$ be the unique $\sS^\infty\times\bmo$ solution to \eqref{bsde:a} with $a_T=0$.  Most standard BSDE existence results require boundedness of the $(t,\omega)$-dependent driver terms.  For the driver of $(\overline{Y}_i, \overline{Z}_i)$ in \eqref{def:bsde2}, $Z_a\in\bmo$ cannot be guaranteed to be bounded.  We apply the linear BSDE results of Jackson and \v{Z}itkovi\'c~\cite{JZ22SICON} to obtain the existence and uniqueness of $(\overline Y_i, \overline Z_i)$ in
\begin{align*}
  d\overline{Y}_i &= \overline{Z}_i\,dB + \frac{1}{\alpha_i}\left(-\rho_i +\alpha_i(\mu_{\sL_i}-\mu_{\eps_i})+ \frac{1+a+\alpha_i\overline{Y_i}}{\exp(a)}-\frac{1}{2}\big(\alpha_\Sigma(\sigma_D+\sigma_\eps-\sigma_\sL)-Z_a\big)^2\right.\\
    &\quad\quad\quad\quad\quad\quad\left.\phantom{\frac{(Y_i)}{A}}+\alpha_i\left(\overline{Z_i}+\sigma_{\eps_i} - \sigma_{\sL_i}\right)\big(\alpha_\Sigma(\sigma_D+\sigma_\eps-\sigma_\sL)-Z_a\big)\right)dt\\
  \overline Y_{i,T} &= 0.
\end{align*}

Using the notation ``$\alpha$'' of Jackson and \v{Z}itkovi\'c~\cite{JZ22SICON}, we have that $\alpha = \exp(-a)$ is sliceable due to its boundedness.  Since our BSDE is of dimension one, Jackson and \v{Z}itkovi\'c~\cite{JZ22SICON} Proposition 2.4 and Corollary 2.10 (item 2) imply that there exists a unique solution $(\overline Y_i, \overline Z_i)\in\sS^\infty\times\bmo$.

Therefore, combining the existence and uniqueness for these uncoupled, one-dimensional BSDEs, we have that there exists a unique $\sS^\infty\times\bmo$ solution to \eqref{def:bsde2}, as desired.

\end{proof}

\begin{proof}[Proof of Theorem \ref{thm:main}]
  We fix $\lambda_{keep}\in[0,1]$, $\lambda_{ubi}\in[0,1-\lambda_{keep}]$, and $\delta\in\R$.  Let the labor proportions $L_i$ for $1\leq i\leq I$ be defined by Definition~\ref{def:Li} and labor response functions $\Lambda^i_j$ for $1\leq i, j\leq I$ with $j\neq i$ be defined by Definition~\ref{def:response-fn}.  By applying Proposition~\ref{prop:bsde}, we obtain the unique solution $\big((a,Z_a), (Y_i, Z_i)_{1\leq i\leq I}\big)$ to the BSDE system \eqref{def:bsde}.  Define the equilibrium price processes, strategies, and dynamics terms as in the statement of Theorem~\ref{thm:main}.

  We must check that all of the conditions from Definition \ref{def:eq} are satisfied.
  \ \\
  
  \noindent {\it Optimality.} Let $i\in\{1,\ldots, I\}$ be given.  We first verify that $(\pi_i,\theta_i,c_i,L_i)\in\sA_i$. Since
  $
    \sM^i(\pi_i,\theta_i,c_i,L_i)\equiv 0
  $,
  it remains to check items~\ref{adm:1} and~\ref{adm:dyn} of Definition~\ref{def:adm}.  From the definition of $\pi_i$,
\[
    \frac{\pi_i\sigma_S}{A}
    =
    \frac{1}{\alpha_i}
    (\kappa-\sigma_A-\alpha_iZ_i)
    \mathbf 1_{\{\sigma_S\neq0\}}
    \in\bmo.
\]
The dynamics of $X_i/A$ in \eqref{def:Xi} therefore have volatility in
$\bmo$ and drift in $\bmo^{1/2}$, since
$\kappa,\sigma_A,Z_i\in\bmo$ and
\[
    \frac{a+\alpha_i(Y_i-\eps_i(L_i)+\sL_i)}{\alpha_iA}
\]
is uniformly bounded. The energy inequalities and the
Burkholder--Davis--Gundy inequality imply that
$X_i/A\in\mathcal S^q$ for every $q>1$; see Kazamaki~\cite[(2.8) and the proof of
Theorem~2.2, p.~29]{Kaz94}.  Since $A$ and
$A^{-1}$ are uniformly bounded, it follows that $X_i\in\mathcal S^q$.
The regularity of $\eps_i(L_i)$ and
$Y_i-\eps_i(L_i)+\sL_i$, together with \eqref{def:ci}, similarly yields $\eps_i(L_i), Y_i, c_i\in\mathcal S^q$ for $q>1$.

Finally,
\[
\begin{aligned}
X_i\mu_A
&=
X_i\left(
\rho_\Sigma+\alpha_\Sigma(\mu_D+\mu_\eps-\mu_{\sL})
-\frac12\kappa^2+\kappa\sigma_A
\right),\\
\pi_i\mu_S
&=
\frac{A\kappa}{\alpha_i}
(\kappa-\sigma_A-\alpha_iZ_i)
\mathbf 1_{\{\sigma_S\neq0\}},\\
\pi_i\sigma_S
&=
\frac{A}{\alpha_i}
(\kappa-\sigma_A-\alpha_iZ_i)
\mathbf 1_{\{\sigma_S\neq0\}}.
\end{aligned}
\]
Putting this together with the boundedness of the remaining coefficients and H\"older's inequality gives
$$
  \E\int_0^T \Big(|X_{i,t}\mu_{A,t}| + |\pi_{i,t}\mu_{S,t}| + |\eps_{i,t}(L_{i,t})|
    + |c_{i,t}| + |X_{i,t}\sigma_{A,t}|^2 + |\pi_{i,t}\sigma_{S,t}|^2\Big) dt < \infty,
$$
and hence item~\ref{adm:1} holds.

Applying It\^o's product rule to $X_i=A(X_i/A)$, using the dynamics of $X_i/A$ in \eqref{def:Xi} and those of $A$, gives exactly the self-financing dynamics \eqref{eqn:self-financing}. The initial-wealth condition holds by the definition of $X_{i,0}$.  Thus, $(\pi_i, \theta_i, c_i, L_i)\in\sA_i$.

  Let $(\pi, \theta, c, L)\in\sA_i$ be given, and let $X=X^{c,L}= \pi S + \theta A$ denote its associated wealth process that satisfies the self-financing condition \eqref{eqn:self-financing} with the perceived income process $\eps_i(L)$.  We proceed with the proof of optimality in two steps.  First, we eliminate the labor control variable.  Then, we use a duality argument to conclude.
  
We have
  \begin{align}
  \begin{split}\label{eqn:X-dyn}
    dX^{c,L} 
    &= \pi (dS+D dt) + \theta(dA + dt) - \big((c-\eps_i(L)+\eps_i(L_i)) - \eps_i(L_i)\big) dt\\
    &= dX^{c-\eps_i(L)+\eps_i(L_i),L_i}.
  \end{split}
  \end{align}
Consequently, by keeping $(\pi, \theta)$ fixed, the transformed strategy $(\pi,\theta,c-\eps_i(L)+\eps_i(L_i),L_i)$ has the same initial wealth and the same wealth process:  $X = X^{c,L} = X^{c-\eps_i(L)+\eps_i(L_i),L_i}$.  Moreover, $\sM^i(\pi,\theta, c, L) = \sM^i(\pi, \theta, c-\eps_i(L)+\eps_i(L_i), L_i)$, so $(\pi, \theta, c-\eps_i(L)+\eps_i(L_i), L_i)\in\sA_i$.

At $t\in[0,T]$, we have
  \begin{align*}
    U_i&(t,c_t,L_t)
    = -\exp\big(-\rho_i t - \alpha_i c_t\big) u_i(L_t)\\
    &= -\exp\Big(-\rho_i t - \alpha_i \big(c_t-\eps_{i,t}(L_t) + \eps_{i,t}(L_{i,t})\big) + \alpha_i \big(\eps_{i,t}(L_{i,t})-\eps_{i,t}(L_t)\big)\Big) u_i(L_t)\\
    &\leq -\exp\Big(-\rho_i t - \alpha_i \big(c_t-\eps_{i,t}(L_t) + \eps_{i,t}(L_{i,t})\big)\Big) u_i(L_{i,t})\\
    &= U_i (t, c_t - \eps_{i,t}(L_t) + \eps_{i,t}(L_{i,t}), L_{i,t}).
  \end{align*}
The above inequality applies since $L_{i,t}$ is chosen to maximize $l\mapsto -e^{-\alpha_i \eps_{i,t}(l)}u_i(l)$ over $l\in(0,1)$.

At $T$, we have
  \begin{align*}
    U_i&\left(T, X^{c,L}_T + \eps_{i,T}(L_T), L_T\right)
    = -\exp\left(-\rho_i T - \alpha_i \left(X^{c,L}_T+\eps_{i,T}(L_T)\right)\right) u_i(L_T)\\
    &= -\exp\left(-\rho_i T - \alpha_i \left(X^{c,L}_T +\eps_{i,T}(L_T) - \eps_{i,T}(L_{i,T}) + \eps_{i,T}(L_{i,T})\right)\right) u_i(L_T)\\
    &\leq -\exp\Big(-\rho_i T - \alpha_i \left(X^{c,L}_T+\eps_{i,T}(L_{i,T})\right)\Big) u_i(L_{i,T})\\
    &= U_i \left(T, X^{c,L}_T + \eps_{i,T}(L_{i,T}), L_{i,T}\right)\\
    &= U_i \left(T, X^{c-\eps_i(L) + \eps_i(L_i),L_i}_T + \eps_{i,T}(L_{i,T}), L_{i,T}\right).
  \end{align*}
The above inequality applies since $L_{i,T}$ is chosen to maximize $l\mapsto -e^{-\alpha_i \eps_{i,T}(l)}u_i(l)$ over $l\in(0,1)$.  The last equality holds by \eqref{eqn:X-dyn}.

  Putting these calculations together yields
  \begin{align}
  \begin{split}\label{calc:L}
    \E&\left[\int_0^T U_i(t,c_t,L_t) dt + U_i(T, X_T + \eps_{i,T}(L_T), L_T)\right]\\
    &\leq \E\left[\int_0^T U_i(t,c_{t}-\eps_{i,t}(L_t)+\eps_{i,t}(L_{i,t}),L_{i,t}) dt + U_i(T, X_T + \eps_{i,T}(L_{i,T}), L_{i,T})\right].
  \end{split}
  \end{align}

Next, we use a duality argument to complete the proof of optimality.  Recall that $U_i(t,c,l) = -e^{-\rho_i t - \alpha_i c} u_i(l)$ for $t\in[0,T]$, $c\in\R$, and $l\in(0,1)$.  For \(y>0\), define the convex dual of the concave utility function
\(U_i(t,\cdot,l)\) by
\[
    \widetilde U_i(t,y,l)
    :=
    \sup_{c\in\mathbb R}
    \big(U_i(t,c,l)-yc\big).
\]
%We define the Fenchel-Legendre transform of $-U_i(t, -c, l)$ by
%$$
%  \widetilde U_i(t,y,l) := \sup_{c\in\R}\big(U_i(t,c,l) - yc\big), 
%  \quad t\in[0,T],\ c\in\R,\ l\in(0,1).
%$$
The functions $U_i$ and $\widetilde U_i$ are related by $U_i(t,c,l) = \widetilde U_i(t, \tfrac{\partial}{\partial c}U_i(t,c,l),l) + c \tfrac{\partial}{\partial c}U_i(t,c,l)$ for all $t\in[0,T]$, $c\in\R$, and $l\in(0,1)$.  Moreover, by the definition of $\xi_i$ in Definition~\ref{def:adm} item \ref{adm:3} and using BSDE \eqref{def:bsde}'s terminal conditions, we have
$$
  \xi_{i,T} = \frac{\partial}{\partial c} U_i(T, X_{i,T}+ \eps_{i,T}(L_{i,T}), L_{i,T}) 
  \quad \text{and} \quad 
  \xi_{i,t} = \frac{\partial}{\partial c} U_i(t, c_{i,t}, L_{i,t}), 
  \quad t\in[0,T]. 
$$
Putting these calculations together yields
\begin{equation}\label{eqn:duality-T}
  \widetilde U_i(T, \xi_{i,T}, L_{i,T}) + \xi_{i,T}(X_{i,T}+\eps_{i,T}(L_{i,T}))
  = U_i(T, X_{i,T} + \eps_{i,T}(L_{i,T}), L_{i,T})
\end{equation}
and
\begin{equation}\label{eqn:duality-t}
  \widetilde U_i(t, \xi_{i,t}, L_{i,t}) + \xi_{i,t}c_{i,t}
  = U_i(t, c_{i,t}, L_{i,t}), \quad t\in[0,T].
\end{equation}
Finally, we have
\begin{align*}
  \E&\left[\int_0^T U_i(t, c_t, L_t) dt + U_i(T, X_T + \eps_{i,T}(L_T), L_{T})\right]\\
  &\leq \E\left[U_i(T, X_T + \eps_{i,T}(L_{i,T}), L_{i,T})+\int_0^T U_i(t,c_{t}-\eps_{i,t}(L_t)+\eps_{i,t}(L_{i,t}),L_{i,t}) dt\right]\\
  &\leq \E\left[\widetilde U_i(T, \xi_{i,T}, L_{i,T}) + \xi_{i,T}\left(X_T + \eps_{i,T}(L_{i,T})\right) \phantom{\int_0^T \left(\widetilde U_i(\xi_{i})\right)}\right.\\
  &\left.\quad\quad\quad\quad+\int_0^T \left(\widetilde U_i(t, \xi_{i,t},L_{i,t}) + \xi_{i,t}\left(c_t-\eps_{i,t}(L_{t}) +\eps_{i,t}(L_{i,t})\right)\right) dt\right]\\%\right]\\
  &\leq \E\left[\widetilde U_i(T, \xi_{i,T}, L_{i,T}) + \xi_{i,T}\left(X_{i,T} + \eps_{i,T}(L_{i,T})\right) + \int_0^T \left(\widetilde U_i(t, \xi_{i,t},L_{i,t}) + \xi_{i,t}c_{i,t}\right) dt\right]\\
  &=\E\left[U_i(T, X_{i,T} + \eps_{i,T}(L_{i,T}), L_{i,T}) + \int_0^T U_i(t, c_{i,t}, L_{i,t}) dt\right].
\end{align*}

Here, the first inequality is due to \eqref{calc:L}.  The second inequality is due to the Fenchel-Young inequality, the definition of $\widetilde U_i$, and $\xi_i >0$.  The third inequality is due to the supermartingale property of
$$
  \sM^i(\pi, \theta, c-\eps_i(L)+\eps_i(L_i), L_i)
  = \xi_i(X-X_i) + \int_0^\cdot \xi_{i,s}\big((c_s-\eps_{i,s}(L_s)) - (c_{i,s} - \eps_{i,s}(L_{i,s}))\big)ds.
$$
The equality is due to the duality relationships \eqref{eqn:duality-T} and \eqref{eqn:duality-t}.  Therefore, the strategy $(\pi_i, \theta_i, c_i, L_i)$ with wealth $X_i$ is optimal for agent $i$'s problem~\eqref{def:optimization}.
  \ \\
  
    \noindent {\it Consistency.}  By Definition \ref{def:response-fn}, for $1\leq i,j\leq I$ with $i\neq j$, we have
  \begin{align}\label{calc:delta}
    \Lambda^i_j(L_i) = \delta\frac{w_i}{w_j}(L_i-L_i)+L_j = L_j,
  \end{align}
  as desired.  Recall that $w_j$ is nonzero for all $t\in[0,T]$, $\bP$-a.s.  In that case, $\Lambda^i_j(L_i)$ is defined to be $L_j$ and, thus, satisfies the consistency condition.
  \ \\
  
  \noindent {\it On-equilibrium perceptions align with reality.} We have
  \begin{align*}
    \sum_{i=1}^I \eps_i(L_i)
    &= \sum_{i=1}^I \left(\lambda_{keep}w_i L_i +\frac{\lambda_{ubi}}{I}\left(w_i L_i + \sum_{j\neq i} w_j \Lambda^i_j(L_i)\right)\right) \quad \text{by \eqref{def:perceived-income}}\\
    &= \sum_{i=1}^I \left(\lambda_{keep}w_i L_i +\frac{\lambda_{ubi}}{I}\left(w_i L_i + \sum_{j\neq i} w_j L_j\right)\right) \quad \text{by \eqref{calc:delta}}\\
    &= \left(\sum_{i=1}^I w_iL_i\right)\left(\lambda_{keep} + \frac{\lambda_{ubi}}{I} +(I-1)\frac{\lambda_{ubi}}{I}\right)\\
    &= \left(\lambda_{keep}+\lambda_{ubi}\right)\sum_{i=1}^I w_i L_i.
  \end{align*}
\ \\
  
  \noindent {\it Market clearing.} For the stock market, on $\{\sigma_S = 0\}$, we have $\sum_{i=1}^I \pi_i = \sum_{i=1}^I \pi_{i,0-} = 1$, while when $\sigma_S\neq 0$,
  $$
    \sum_{i=1}^I \pi_i\sigma_S = \sum_{i=1}^I \frac{A}{\alpha_i}(\kappa - \sigma_A - \alpha_i Z_i) = \frac{A}{\alpha_\Sigma}(\kappa - \sigma_A - \alpha_\Sigma Z_\Sigma) = \sigma_S,
  $$
  by \eqref{def:pii} and \eqref{eqn:terms}.
  
  To see that we have clearing in the annuity and real goods market, we will prove that $\sum_{i=1}^I X_i = S+A$.  If $\sum_{i=1}^I X_i = S+A$ holds, then the annuity market clears using that $X_i = \theta_i A + \pi_i S$ for $i=1,\ldots,I$, and the real goods market clears by
  \begin{align*}
    \sum_{i=1}^I c_i &= \frac{1}{A}\sum_{i=1}^I X_i + \frac{a}{\alpha_\Sigma} + Y_\Sigma + \sL_\Sigma \\
    &= \frac{S}{A}+1 +\frac{a}{\alpha_\Sigma} + Y_\Sigma + \sL_\Sigma \quad \text{using that $\sum_{i=1}^I X_i = S+A$} \\
    &= 1+D + \sum_{i=1}^I \eps_i(L_i)\quad \text{by definition of $S$ in \eqref{def:prices}}.
  \end{align*}
  
  We proceed to prove that $\sum_{i=1}^I X_i = S+A$.  One can verify by using the dynamics of $\frac{X_i}{A}$ from \eqref{def:Xi} that
  \begin{align*}
    d\left(\sum_{i=1}^I \frac{X_i}{A}\right)
    &= \frac{1}{A}\left(\left(\eps_\Sigma-\frac{a}{\alpha_\Sigma} - Y_\Sigma - \sL_\Sigma\right) + \mu_S - \sigma_A\sigma_S\right)dt + \frac{\sigma_S}{A} dB \\
    &= \frac{1}{A}\left(\frac{S}{A} - D + \mu_S - \sigma_A\sigma_S\right)dt + \frac{\sigma_S}{A} dB\\
    &= d\left(\frac{S}{A}\right) = d\left(\frac{S+A}{A}\right).
  \end{align*}
  Since the processes $X_i$ are continuous and have $\sum_{i=1}^I X_{i,0} = \sum_{i=1}^I \left(\theta_{i,0-}A_0 + \pi_{i,0-}S_0\right) = A_0 + S_0$, we have that
  $$
    \frac{1}{A} \sum_{i=1}^I X_i = \frac{S+A}{A}.
  $$
   Thus, $\sum_{i=1}^I X_i = S+A$.
\end{proof}

Finally, we prove Proposition~\ref{prop:welfare}, which computes a representation for welfare in terms of the aggregate certainty equivalent in a market with constant, positive wage rates and deterministic dividend stream dynamics coefficients.

\begin{proof}[Proof of Proposition~\ref{prop:welfare}]
Since the wage rates are constant, the processes $L_i$,
$\eps_i(L_i)$, and $\sL_i$ are deterministic constants. Moreover,
because $\mu_D$ and $\sigma_D$ are deterministic, all coefficients
and terminal conditions in the BSDE system \eqref{def:bsde} are
deterministic. The system therefore has a deterministic solution with
zero martingale integrands. By uniqueness,
\[
    Z_a\equiv0
    \qquad\text{and}\qquad
    Z_i\equiv0,\quad i=1,\ldots,I.
\]
It follows that $\kappa=\alpha_\Sigma\sigma_D$, $\sigma_A = 0$, and $\sigma_S = A\sigma_D$.  Since $\sigma_D$ is nonzero, the definition of $\pi_i$ in \eqref{def:pii} gives
\[
    \frac{\pi_i\sigma_S}{A} =
    \frac{\alpha_\Sigma}{\alpha_i}\sigma_D.
\]

Fix $i\in\{1,\ldots,I\}$, and define
\[
    W_{i,t}
    :=
    -\exp\left(
        -\rho_i t
        -\alpha_i\left(
            \frac{X_{i,t}}{A_t}+Y_{i,t}
        \right)
    \right)
  \quad \text{and} \quad
    V_{i,t}
    :=
    \int_0^t U_i(s,c_{i,s},L_i)\,ds+W_{i,t}.
\]
Using the definition of $c_i$ in \eqref{def:ci}, together with
$A=e^a$ and $\alpha_i\sL_i=\log u_i(L_i)$, gives
\[
    U_i(t,c_{i,t},L_i)=\frac{W_{i,t}}{A_t}.
\]
Using It\^o's Lemma and the dynamics of $X_i/A$ and
$Y_i$ yields
\[
    dW_{i,t}
    =
    -\frac{W_{i,t}}{A_t}\,dt
    -\alpha_\Sigma\sigma_D(t)W_{i,t}\,dB_t
    \quad \text{and} \quad
    dV_{i,t} =
    -\alpha_\Sigma\sigma_D(t)W_{i,t}\,dB_t.
\]
Consequently, we have 
$$
  d\left(e^{\int_0^t \tfrac{1}{A_s}ds}W_{i,t}\right)
  = -\alpha_\Sigma \sigma_D(t) e^{\int_0^t \tfrac{1}{A_s}ds} W_{i,t} dB_t,
$$
and so $e^{\int_0^t \tfrac{1}{A_s}ds}W_{i,t} = W_{i,0} \sE\left(-\alpha_\Sigma \int_0^\cdot \sigma_D(s) dB_s\right)_t$.  Because $\sigma_D$ is bounded and deterministic, this stochastic
exponential is square-integrable. Since $A$ and $A^{-1}$ are
uniformly bounded,
\[
    V_{i,t} = V_{i,0} + 
    \int_0^t e^{-\int_0^s \tfrac{1}{A_{s'}}ds'}\,d\left(e^{\int_0^\cdot \tfrac{1}{A_{s'}}ds'}W_{i,\cdot}\right)_s
\]
is a martingale.

At the terminal time, $A_T=1$ and
$Y_{i,T}=\eps_i(L_i)-\sL_i$. Therefore,
$$
  W_{i,T}
  = -\exp\left(-\rho_iT -\alpha_i\bigl(X_{i,T}+\eps_i(L_i)-\sL_i\bigr)\right)
  = U_i\bigl(T,X_{i,T}+\eps_i(L_i),L_i\bigr).
$$
Thus,
\[
\begin{aligned}
&\mathbb E\left[
    \int_0^T U_i(t,c_{i,t},L_i)\,dt
    +
    U_i\bigl(T,X_{i,T}+\eps_i(L_i),L_i\bigr)
\right]\\
&\qquad
=\mathbb E[V_{i,T}]
= V_{i,0}
= -\exp\left(-\alpha_i\left(\frac{X_{i,0}}{A_0}+Y_{i,0}\right)\right).
\end{aligned}
\]

On the other hand, the definition of the certainty equivalent gives
$$
\int_0^T U_i(t,CE_i,L_i)\,dt +U_i(T,CE_i,L_i)
= -u_i(L_i)e^{-\alpha_iCE_i}K_i(T).
$$
Equating these two expressions gives%, multiplying by $-1$, and taking logarithms gives
$$
  \log u_i(L_i) -\alpha_iCE_i +\log K_i(T)
    = -\alpha_i\left(\frac{X_{i,0}}{A_0}+Y_{i,0}\right),
$$
which yields the individual formula, $CE_i = \tfrac{1}{\alpha_i}\log u_i(L_i) + \tfrac{1}{\alpha_i}\log K_i(T) + \tfrac{X_{i,0}}{A_0} + Y_{i,0}$.

Summing over $i$ gives us that
\[
\begin{aligned}
\sum_{i=1}^I CE_i
&=
1+\frac{S_0}{A_0}
+Y_{\Sigma,0}
+\sL_\Sigma
+\sum_{i=1}^I
  \frac{1}{\alpha_i}\log K_i(T).
\end{aligned}
\]
By \eqref{def:prices}, $\frac{S_0}{A_0}
    +Y_{\Sigma,0}
    +\sL_\Sigma
    =
    D_0+\eps_{\Sigma,0}
    -\frac{a_0}{\alpha_\Sigma}$, which proves the desired result.
\end{proof}

\ \\
\noindent{\bf Funding:} No funding, grants, or other support was received to assist with the preparation of this manuscript.
\vskip 0.1in
\noindent{\bf Competing Interests:} The author has no relevant financial or non-financial interests to disclose.
\vskip 0.1in
\noindent{\bf Data availability statement:} No new data were created or analyzed in this study. Data sharing is not applicable to this article.

%\vskip 5in

\bibliographystyle{plain}
\bibliography{finance_bib}

@article{B99JEDC,
  title = {On the fluctuations in consumption and market returns in the presence of labor and human capital: An equilibrium analysis},
  journal = {Journal of Economic Dynamics and Control},
  volume = {23},
  number = {7},
  pages = {1029--1064},
  year = {1999},
  author = {Suleyman Basak}
}

@article{BMS92JEDC,
  title = {Labor supply flexibility and portfolio choice in a life cycle model},
  year = {1992},
  journal = {Journal of Economic Dynamics and Control},
  volume = {16},
  number = {3-4},
  pages = {427--449},
  author = {Zvi Bodie and Robert C. Merton and William F. Samuelson}
}

@article{CDW23NMH,
  year = {2023},
  month = {March},
  title = {Recurring cash transfers to enhance the mental wellbeing of {A}mericans},
  author = {Stacia West and Amy Castro and P. Murali Doraiswamy},
  journal = {Nature Mental Health},
  volume = {1},
  pages = {148--150}
}

@article{CCLS23FS,
  author = {Xiao Chen and Jin Hyuk Choi and Kasper Larsen and Duane J. Seppi},
  title = {Price impact in {N}ash equilibria},
  year = {2023},
  journal = {Finance and Stochastics},
  volume = {27},
  number = {2},
  pages = {305--340}
}

@article{CL14RAPS,
  title = {Incomplete Continuous-Time Securities Markets with Stochastic Income Volatility},
  author = {Peter O. Christensen and Kasper Larsen},
  journal = {Review of Asset Pricing Studies},
  volume = {4},
  number = {2},
  year = {2014},
  pages = {247-285}
}

@article{DF24AER,
  title = {Universal Basic Income: A Dynamic Assessment},
  author = {Diego Daruich and Raquel Fern\'andez},
  journal = {American Economic Review},
  volume = {114},
  number = {1},
  month = {January},
  year = {2024},
  pages = {38--88}
}

@article{GMR12AEJ,
  author = {Gertler, Paul J. and Martinez, Sebastian W. and Rubio-Codina, Marta},
  title = {Investing Cash Transfers to Raise Long-Term Living Standards},
  journal = {American Economic Journal: Applied Economics},
  volume = {4},
  number = {1},
  year = {2012},
  month = {January},
  pages = {164--92}
}

@article{GM19ARE,
  journal = {Annual Review of Economics},
  title = {Universal Basic Income:  Some Theoretical Aspects},
  author = {Maitreesh Ghatak and Fran\c{c}ois Maniquet},
  year = {2019},
  volume = {11},
  pages = {895--928}
}

@book{H18,
  author = {Hughes, Chris},
  year = {2018},
  title = {Fair Shot: Rethinking Inequality and How We Earn},
  address = {New York},
  publisher = {St. Martin's Books}
}

@article{JZ22SICON,
   author = {Jackson, Joe and \v{Z}itkovi\'{c}, Gordan},
   title = {Existence and Uniqueness for Non-{M}arkovian Triangular Quadratic {BSDE}s},
   journal = {SIAM Journal on Control and Optimization},
   volume = {60},
   number = {3},
   pages = {1642--1666},
   year = {2022},
}

@book{Kaz94,
  author    = {Norihiko Kazamaki},
  title     = {Continuous Exponential Martingales and {BMO}},
  series    = {Lecture Notes in Mathematics},
  volume    = {1579},
  publisher = {Springer-Verlag},
  address   = {Berlin},
  year      = {1994},
  isbn      = {978-3-540-58042-3}
}

@article{K04JEDC,
  title = {Taxation, risk-taking and growth: a continuous-time stochastic general equilibrium analysis with labor-leisure choice},
  author = {Turalay Kenc},
  journal = {Journal of Economic Dynamics and Control},
  volume = {28},
  number = {8},
  year = {2004},
  pages = {1511--1539}
}

@article{K18QJE,
  author = {Lorenz Kueng},
  title = {Excess Sensitivity of High-Income Consumers},
  journal = {The Quarterly Journal of Economics},
  volume = {133},
  number = {4},
  month = {November},
  year = {2018},
  pages = {1693--1751}
}

@article{PRX21FP,
  journal = {Frontiers in Psychology},
  year = {2021},
  volume = {12},
  month = {May},
  title = {The Role of Income Volatility and Perceived Locus of Control in Financial Planning Decisions},
  author = {Johanna Peetz and Jennifer Robsonand and Silas Xuereb}
}

@article{T08SPA,
  title = {Solvability of backward stochastic differential equations with quadratic growth},
  author = {Revaz Tevzadze},
  journal = {Stochastic Processes and their Applications},
  year = {2008},
  volume = {118},
  number = {3},
  pages = {503--515}
}

@article{V99RES,
  year = {1999},
  author = {Dimitri Vayanos},
  title = {Strategic trading and welfare in a dynamic market},
  journal = {Review of Economic Studies},
  volume = {66},
  number = {2},
  pages = {219--254}
}

@article{W24FS,
  author = {Kim Weston},
  title = {Existence of an equilibrium with limited participation},
  year = {2024},
  volume = {28},
  number = {2},
  pages = {329--361},
  journal = {Finance and Stochastics}
}

@article{WZ20FS,
   author = {Weston, Kim and {\v{Z}}itkovi{\'c}, Gordan},
   year = {2020},
   title = {An incomplete equilibrium with a stochastic annuity},
   journal = {Finance and Stochastics},
   volume = {24},
   number = {2},
   pages = {359--382}
}

@book{Y18,
  author = {Yang, Andrew},
  year = {2018},
  title = {The War on Normal People: The Truth About {A}merica's Disappearing Jobs and Why Universal Basic Income is Our Future},
  publisher = {Hachette Books},
  address = {New York}
}

\end{document}